\documentclass[journal]{IEEEtran}

\usepackage{setspace}


\usepackage{graphics,
	psfrag,
	epsfig,
	amsthm,
	cite,
	amssymb,
	url,
	dsfont,
	algorithm,
	algorithmic,
	balance,
	enumerate,
	color,
	setspace
}
\usepackage{amsmath}

\usepackage{epstopdf}

\newtheorem{theorem}{Theorem}
\newtheorem{corollary}{Corollary}
\newtheorem{proposition}{Proposition}

\newcommand{\eref}[1]{(\ref{#1})}
\newcommand{\sref}[1]{Section~\ref{#1}}

\newcommand{\pref}[1]{Proposition~\ref{#1}}
\newcommand{\cref}[1]{Constraint~\ref{#1}}
\newcommand{\corref}[1]{Corollary~\ref{#1}}
\newcommand{\thref}[1]{Theorem~\ref{#1}}

\newcommand{\tref}[1]{Table~\ref{#1}}
\newcommand{\algref}[1]{Algorithm~\ref{#1}}

\hyphenation{op-tical net-works semi-conduc-tor}

\newcommand{\ignore}[1]{}


\epsfxsize=3.0in
\pagestyle{plain}
\IEEEoverridecommandlockouts
%


\pagenumbering{gobble}
\usepackage{subcaption}
\addtolength{\textfloatsep}{-2mm}
\setlength{\abovedisplayskip}{0.5mm}
\setlength{\belowdisplayskip}{0.5mm}
\setlength{\abovecaptionskip}{0.5mm}
\setlength{\belowcaptionskip}{0.5mm}
\setlength{\floatsep}{1mm}

\begin{document}


\title{\vspace{-.36cm} Cross-Layer Network Codes for Content Delivery in Cache-Enabled D2D Networks}

\author{Mohammed S. Al-Abiad, Student Member, IEEE, Md. Zoheb Hassan, Student Member, IEEE, and Md. Jahangir Hossain, Senior Member, IEEE
	
	
	\thanks{Mohammed S. Al-Abiad, Md. Zoheb Hassan, and Md. Jahangir Hossain are with the School of
		Engineering, University of British Columbia, Kelowna, BC V1V 1V7, Canada
		(e-mail: m.saif@alumni.ubc.ca, zohassan@mail.ubc.ca, jahangir.hossain@ubc.ca).
		
		}
	\vspace{-0.8cm}}
\maketitle

\IEEEoverridecommandlockouts

\begin{abstract}

In this paper, we consider the use of cross-layer network coding (CLNC), caching,  and  device-to-device (D2D) communications to jointly optimize the delivery of a set of popular contents to a set of user devices (UDs). In the considered D2D network, a group of near-by UDs cooperate with each other and use NC to combine their cached files, so as the completion time required for delivering all requested contents to all UDs is minimized. Unlike the previous work that considers only one transmitting UD at a time, our work allows multiple UDs to transmit simultaneously given the interference among the active links is small. Such configuration brings a new trade-off among scheduling UDs to transmitting UDs, selecting the coding decisions and the transmission rate/power. Therefore, we consider the completion time minimization problem that involves scheduling multiple transmitting UDs, determining their transmission rates/powers and file combinations. The problem is shown to be intractable because it involves all future coding decisions. To tackle the problem at each transmission slot, we first design a graph called herein
the D2D Rate-Aware IDNC graph where its vertices have weights that judiciously balance between the rates/powers of the transmitting UDs and the number of their scheduled UDs. Then, we propose an innovative and efficient CLNC solution that iteratively selects a set of transmitting UDs only if the interference caused by the transmissions of the newly selected UDs does not significantly impact the overall completion time.  Simulation results show that the proposed solution offers significant completion time reduction compared with the existing algorithms.

\end{abstract}

\begin{IEEEkeywords} 

Cross layer network coding, content delivery, device-to-device communications, power optimization, real-time applications.

\end{IEEEkeywords}

\section{Introduction}

\subsection{Overview}
\IEEEPARstart{T}{he} exploding amount of mobile traffic, e.g., streaming applications, YouTube
videos, video-on demand,  consume large bandwidth and high transmission energy of the source limited cellular networks. Moreover, if  the cloud base-stations
(CBSs) are fully loaded,  it is not possible for the CBSs to schedule all user devices (UDs). To circumvent these challenges, D2D communications have widely been considered as a promising technology \cite{1,2,3}. The performance of D2D communications can be further improved by pushing some popular contents to the UDs near to the CBSs. This
integrated system is referred to as \textit{cache-enabled D2D} system. Cache-enabled D2D system draws remarkable benefits for alleviating the traffic congestion of the cellular
network and reducing both the CBS involvement and end-to-end latency. In this work, we consider cache-enabled D2D system, where multiple UDs cache some popular contents and cooperate among them to deliver their cached contents that are requested by other UDs. As such, all the requested contents are delivered to all UDs  within the lowest possible network
delay.

Network coding (NC) has been shown to be promising for improving  throughput and minimizing decoding delay and completion time for  numerous applications in wireless  networks \cite{2,3,4,5}. Specifically, random linear NC (RLNC) can achieve
the optimal throughput of wireless broadcast networks \cite{5}. However, this throughput achievement comes at the expense of  complex encoding (i.e., mixing contents using coefficients from a large Galois field), high decoding delay, and prohibitive
computational complexity. This is suitable for delay-tolerant contents and UDs with high capabilities and buffer sizes. The report by CISCO \cite{6} shows that a significant portion of network traffic is popular contents (popular videos and photos) that are frequently requested by UDs in short time. Therefore, it is
crucial to deliver these delay-sensitive contents with minimum possible delay. For this purpose, instantly decodable NC (IDNC) is adopted \cite{7}. IDNC performs simple encoding XOR operation at the transmitter and simple decoding XOR operation at the receiver, and thus instant use of the received contents. Accordingly, it is suitable for implementation in small and low cost UDs \cite{8}. Therefore, D2D communication and IDNC technique can be exploited to deliver popular contents to UDs with the lowest possible delay while offloading the CBSs. For instance, consider that a content consists of  set of files $f_1, f_2,$ and $f_3$ is wanted by set of UDs $u_1, u_2$, and $u_3$. Suppose that the CBS transmitted the wanted contents to the UDs and due to channel impairments UD $u_i$ did not obtain file $f_i$ for $1 \leq i \leq 3$. These missed files can be traditionally re-transmitted from the CBS to each UD until all UDs obtain them correctly. As a result, the CBS requires at least $3$ uncoded transmissions for delivering these files, which degrade system performance \cite{8a}. However, UDs can be either file cachers that can deliver their cached files to other UDs or file requesters that can receive the wanted files from other UDs. In our considered example, UD $u_1$ holds files $f_2$ and $f_3$, and accordingly, it can transmit the binary XOR combination $f_2\oplus f_3$ to UDs $u_2$ and $u_3$. Then, UD $u_2$ holds $f_1$ and can provide it to UD $u_1$. As a result, $2$ transmissions are required for delivering all files to all UDs.  Therefore, the cooperation among UDs can be utilized with IDNC to combine files and transmit them to interested UDs via D2D links. As such,  the requested files can be delivered to the requesting UDs quickly while offloading the CBS's resources. 

\subsection{Related Works and Motivation}
The content delivery problem, known as \textit{completion time minimization problem}, in IDNC-enabled networks is considered based on layer functionalities as follows. From only network-layer perspective, IDNC schedule is adopted to solve the problem in real-time applications in terms of minimizing the number of transmissions \cite{8, 9a, 9aa, 9aaa, 9}. In particular, these related works modeled the status of physical channels by file erasure probabilities and integrated such erasures in the coding decisions, e.g., see for example \cite{9a}, \cite{9}. This improves the system's performance from network-layer perspective by scheduling many UDs to the same resource block, but it degrades the performance from physical-layer perspective through selecting the minimum rates of all the scheduled UDs. This results in prolonged transmission duration and thus, consumes the time resources of network. Unlike network-layer IDNC that depends solely on file combinations for aiding the coding decisions, rate-aware IDNC (RA-IDNC) also depends on the channel capacities of different UDs. This allows a new degree-of-freedom, such as, choosing the transmitting UDs, their transmission rates, and IDNC file combinations, to optimize the content delivery problem. The authors of \cite{10, 10n, 11,12,13, 13nn, 14, 14a, 15} used RA-IDNC in centralized and decentralized networks for optimizing different system parameters. For example, the authors of \cite{12} used RA-IDNC scheme in cloud radio access networks (C-RANs) for completion time minimization. The authors of \cite{13,14} developed cross-layer IDNC to optimize throughput and Quality-of-Service (QoS) of UDs in centralized C-RANs and Fog-RANs, respectively.\ignore{ Recently, the authors in \cite{15new} developed cross-layer NC schemes for minimizing the completion time in D2D-aided Fog-RAN system.} 

For D2D systems, the authors of \cite{15} considered a vanilla-version of the completion time minimization problem. Indeed, the problem was considered by simply selecting the transmitting UD and its NC combination. However, the main drawback of the work in \cite{15} is that only one UD is allowed to transmit coded file in each transmission slot. Thus, they ignored the interference caused by different transmitting UDs to the scheduled UDs. Actually, in D2D networks, UDs are spatially distributed in a region which creates an opportunity to judiciously select multiple transmitting UDs that schedule a significant set of other UDs. Such configuration
brings a new trade-off between scheduling UDs to transmitting UDs and
choosing the coded files and the transmission rate/power. However, solving the completion time minimization problem while jointly considering the previously cached files at users, their transmission rates/powers, NC, and D2D communications has not explored yet. Furthermore, developing a joint cross-layer IDNC for the completion time minimization in D2D networks is new to the area of NC. Therefore, our setting in this work is much more realistic than the one used in \cite{15}  as it enables for both selection of multiple transmitting UDs and optimization of the employed transmission rates using power control on each transmitting UD.

The completion time minimization problem is motivated by real-time applications, i.e., video streaming. In these applications, UDs need to obtain a set of popular files from other transmitting UDs with the minimum possible completion time, given the required minimum rates for QoS.  Unlike pre-loads that can be done at much lower rates or at off-peak times, our work delivers popular contents to UDs with the minimum possible completion time. Consider that a popular video representing a frame of files is requested by a set of UDs located in a playground. Many  UDs in the playground are interested in receiving this frame.  At any given time, consider that UDs have already cached some files and requested some other files from that frame. In order to deliver the requested files in that frame without any interruption, UDs should receive their requested files with a minimum possible delay. For such a case, users can re-XOR the transmitted files from transmitting UDs to progressively and immediately use the decoded files at the application layer. Such progressive file decoding at the UDs meets the delay requirements and streaming quality. 

\subsection{Contributions}
Unlike previously-discussed existing works that considered the optimization factors (e.g., NC, users' cached and requested files, UD scheduling, QoS guarantee requirements, and power optimization) and their corresponding problems separately, our work develops a framework that jointly considers all the aforementioned factors. To this end, we develop a novel \textit{cross-layer network coding (CLNC)} optimization framework taking NC and rate/power optimization into account. The main contributions of our work are summarized as follows.
\begin{itemize}
	\item The completion time minimization problem is shown to be computationally intractable due to the interdependence
	among variables such as the UDs' cached and requested files, power optimization, channel qualities, and coding decisions. Using the lower bound on the completion time used in the literature, we tackle the problem and solve it online at each transmission slot. 
	\item We design a D2D-RA-IDNC graph to efficiently transform the completion time minimization problem to a maximal weight independent set (MWIS)	 problem. The designed D2D-RA-IDNC graph represents all the feasible rates and	NC decisions for all potential transmitting UDs. The problem is
	then reformulated as an MWIS problem that can be efficiently solved
	using low complexity graph theoretical solution. The designed weights of the vertices in this graph balances between the transmission rate and number of scheduled UDs in each transmission.
	
	\item We develop a CLNC solution that efficiently iterates between finding the MWIS in the designed D2D-RA-IDNC graph and optimizing the power of the transmitting UDs using a function
	evaluation (FE) method. In each iteration,  a new transmitting UD is selected only if the resultant interference does not significantly degrade the completion time performance. The complexity of our developed CLNC solution is analyzed.
	\item We compare our proposed scheme with existing baseline schemes. Simulation results demonstrate that our proposed CLNC solution significantly minimizes the completion time compared with existing algorithms. 
\end{itemize}

The rest of this paper is organized as follows. \sref{SMMM} overviews the system model. The completion time approximation and problem formulation are illustrated in \sref{PF}. In \sref{GC}, we present the graph construction and problem transformation and propose cross layer network coding solution in  \sref{PS}. Finally, we present selected simulation results in \sref{NC} and conclude the work in \sref{C}.

\section{System Model} \label{SMMM}
\subsection{System Overview}
We consider a cache-enabled D2D system with one cloud base-station (CBS) and $N$ user-devices (UDs), denoted by the set $\mathcal{N}=\{u_1,u_2, \cdots,u_N\}$. We adopt a fully connected D2D model where D2D links are usually implemented with low-range transmission technologies, such as Bluetooth and WiFi. Therefore, we assume that each UD is connected to all other UDs. Each UD is assumed to be equipped with single antenna and used half-duplex channel. Accordingly, each
UD can either transmit or receive at a given transmission slot. Unlike the work in \cite{15} that ideally considered interference-free setup, our realistic work considers that UDs use the same frequency band and can cooperate by utilizing D2D links and transmit simultaneously. With such a cooperation among UDs for content delivery, the CBS dose not need to transmit requested contents to UDs. Therefore, the CBS is responsible for selecting a set of transmitting UDs and their NC combinations and power allocations, that deliver requested contents to requesting UDs. Accordingly, the whole process of coding decisions in this work is executed at the CBS and depends on selecting the transmission rate and transmit power allocation of each transmitting UD.

Let $\mathcal{F}$ denote a frame of $F$ files, $\mathcal{F}=\{f_1,f_2, \cdots,f_F\}$, each of size $B$ bits. This data frame represents a popular content, i.e., YouTube video, and constitutes the set of most frequent requested files by the UDs for any given time period.\ignore{ Using UDs cooperation, this frame is required to be delivered to requesting UDs via D2D links with minimum delivery time.} We assume that UDs proactively cached some files from $\mathcal F$ and stored them in their local caches, i.e., $\mathcal C_{u_k}$ represents the set of the files locally cached at UD $u_k$. We assume
that UD $u_k$ requests a set of files, from the frame
$\mathcal F$, and denoted by the demand set $\mathcal W_{u_k}=\mathcal F\backslash \mathcal C_{u_k}$. Following the caching model in \cite{15}, each file from $\mathcal F$ is cached by at least one UD in $\mathcal N$ which leads to the fact that $\cup_{u_k \in \mathcal{N}} \mathcal C_{u_k}=\mathcal{F}$. The set of UDs that having non-empty \textit{demand} sets at the $t$-th transmission slot is denoted by $\mathcal N_{w,t}$, i.e., $\mathcal N_{w,t} = \{u_k \in \mathcal N | \mathcal{W}_{u_k,t} \neq \varnothing\}$. Without loss of generality, we assume that all UDs have non-empty \textit{demand} sets. Otherwise, they can simply
be ignored from the set $\mathcal N$ without affecting the system
performance. When an UD receives
its requested files, it acts as a transmitting UD that provides its received files to the interested UDs. The goal is to deliver the requested files to the UDs within the lowest possible completion time by leveraging NC and D2D links.

Let $\gamma_{u_k,u_i}$ denote the channel gain between UD $u_k$ and UD $u_i$ and $Q_\text{max}$ denote the maximum transmit power for
D2D link. We consider slow fading channels, and accordingly,  $\gamma_{u_k,u_i}$ is considered to be fixed  during a single transmission but may change independently from one file transmission to another file transmission.  Then, the achievable capacity of a D2D pair ($u_k, u_i$) is given by 
$C_{u_k,u_i} = \log_2(1+ \text{SINR}_{u_k,u_i}(\textbf{Q})), \forall u_k \in \mathcal A$, where $\text{SINR}_{u_k,u_i}(\textbf{Q}))$ is the corresponding signal-to-interference plus noise-ratio experienced by
UD $u_i$ when it is scheduled to UD $u_k$. This SINR is given by $\text{SINR}_{u_k,u_i}(\textbf{Q})) =  \cfrac{Q_{u_k} |\gamma_{u_k,u_i}|^{2}}{N_{0} +\sum_{u_m\neq u_k}Q_{u_m}| \gamma_{u_m,u_i}|^{2}}, \forall u_k,u_m\in \mathcal A$, where $\mathcal A$ is the set of transmitting UDs,
$N_0$ is the noise power, $Q_{u_k}$, $Q_{u_m}$ are the transmit powers of UDs $u_k$ and $u_m$ which both are bounded by $Q_\text{max}$, and $\textbf{Q} = [Q_{u_k}]$ is a row vector containing the power levels of the transmitting UDs. The channel  capacities of all pairs of D2D links  can be stored  in an $N \times N$  \emph{capacity status matrix (CSM)} $\mathbf{C} = [C_{u_k,u_i}],$ $\; \forall (u_k, u_i)$. Since UD $u_k$ cannot transmit to itself, $C_{u_k,u_k} = 0$.

\ignore{We assume that a transmitting UD $u_k$ can infinitesimally  adjust its modulation scheme to adopt any transmission rate $r(t)$. For transmission synchronization, all devices in $\mathcal A$ should  adopt a common transmission rate $r(t)$ or there will be a waiting time to synchronize all devices.
	Moreover, the $t$-th D2D transmission from device $d_k$ will be successful  at  device $d_i\in \mathcal C_k$ if the adopted  transmission rate $r(t)$ is less than or equal to  the channel  capacity $C_{k,i}$ (i.e.,  $r(t) \leq C_{k,i}$).\footnotemark \footnotetext{ We focuse only in this work on  power control and rate adaptation aware IDNC for D2D networks. The transmission loss due to channel estimation errors, device mobility and hidden terminal collisions is not considered and left as an intersting future work.}  If $r(t) >  C_{k,i}$,  device $d_i$ will not be able to receive the transmission from device $d_k$.}

\subsection{Rate-Aware NC and Expression of the Completion Time Metric}
Let $\mathtt {f}_{u_k,t}$ denote the XOR file combination to be sent by UD $u_k$ to the set of scheduled UDs $\mathtt u(\mathtt {f}_{{u_k},t})$ at the $t$-th transmission. For simplicity, we use time index $t$  to represent the $t$-th transmission slot, i.e., $t = 1$ refers to the first transmission slot. The file combination $\mathtt {f}_{u_k}$ is an element of the power set $\mathcal P({\mathcal C_{u_k}})$ of the cached files at UD $u_k$. At every transmission $t$, each scheduled UD  in $\mathtt u(\mathtt {f}_{{u_k,t}})$ can re-XOR $\mathtt {f}_{{u_k,t}}$  with its
previously cached files to decode a new requested file. To ensure successful reception at the UDs, the  maximum transmission rate of a particular transmitting UD  is equal to the minimum achievable capacity of its scheduled
UDs. Therefore, the set of targeted users\footnote{The term ``targeted users" is given for the scheduled UDs who receive an instantly-decodable transmission.} by UD $u_k$ is expressed as $
\mathtt {u}(\mathtt {f}_{u_k})=\left\{u_i \in \mathcal{N}_w \ \big||\mathtt {f}_{u_k} \cap \mathcal{W}_{u_i}| = 1~\text{and}~ R_{u_k} \leq C_{u_k,u_i} \right\}$.
Without loss of generality, the set of all targeted UDs, when $|\mathcal A|$ transmitting UDs transmit the set of combinations $\mathtt {f}(\mathcal A)$, is represented by $
\mathtt {u}(\mathtt {f}(\mathcal A))$, where $u_k$, $\mathtt {f}_{u_k}$, $\mathtt {u}({\mathtt f}_{u_k})$ are elements in $\mathcal A$, $\mathtt {f}(\mathcal A)$, and $\mathtt {u}(\mathtt {f}(\mathcal A))$, respectively \footnote{The symbol $|\mathcal X|$ represents the cardinality of the set $\mathcal X$.}.

Let $T_{u_k}$ denote the duration of the transmission from the $u$-th UD.  The duration for transmitting $\mathtt {f}_{u_k}$ from UD $u_k$ with rate $R_{u_k}$ to $\mathtt u(\mathtt f_{u_k})$ is $T_{u_k}=\frac{B}{R_{u_k}}$ seconds.
For transmission synchronization, all transmitting UDs in the set $\mathcal A$ adopt a common
transmission rate, denoted as $R$. Otherwise, different transmitting UDs will have different transmission rates, and thus, they will have different transmission
durations. So, UDs who finish the transmission first must wait for those who transmit with the slowest-rate to start a new transmission at the same time. Therefore, we adopt one transmission rate for all transmitting UDs, and accordingly, the transmission duration for sending any coded/uncoded file from any transmitting UD is denoted by $T_t$ and expressed by $T_{t}=\frac{B}{R}$ seconds. Consequently,
UDs that are not targeted at transmission slot $t$, 
experience $T_{t}$ seconds of delay has a cumulative delay as defined below.\\
\textit{\textbf{Definition 1:} Any UD with non-empty demand set experiences $T_{t}$ seconds of time delay if it does not receive any requested file at  $t$-th transmission. The accumulated time delay of UD $u_i$ is the sum of $T_{t}$ seconds at each transmission until $t$-th transmission, denoted by $\mathbb{T}_{u_i}(t)$, and expressed as 
	\begin{align} \label{eq3}
	\mathbb{T}_{u_i}(t) = \mathbb{T}_{u_i}(t-1)+
	\begin{cases}
	T_t & \text{if} ~u_i \notin  \mathtt u{(\mathtt f(\mathcal A))}\\
	T_t & \text{if} ~u_i\in \mathcal A.
	\end{cases}
	\end{align}}

Let $\mathtt T_{u_i}$ denote the completion time of UD $u_i$ until it receives the requested files. The completion time for UD $u_i$ includes two parts, its accumulated time delay $\mathbb{T}_{u_i}$ due to receiving a non-instantly decodable file and the time duration of sending all instantly decodable transmissions. In other words, such completion time is divided into consecutive instantly and non instantly decodable transmission for each UD in $\mathcal N_w$ until it obtains all requested files. Subsequently, the overall completion time $\mathtt T = \max_{u_i \in \mathcal N} \{\mathtt T_{u_i}\}$ is the time required until all UDs recover all files. The used notations and variables are summarized in \tref{table_1}.

To minimize the overall completion time, we need to find the optimal schedule from the beginning of the D2D transmission phase at $t=1$ until all UDs obtained all requested files at $t=\mathcal {j{\mathcal{S}}j}$. Here,  $\mathcal{S}$ is defined  as a collection  of  transmitting UDs, file  combinations   and transmission rates/powers  until all UDs in $\mathcal N_w$ receive all $F$ files, i.e., $ \mathcal{S}= \{\mathcal A(t), \mathcal P(\mathcal C_{u_i}(t)), \mathcal R(t)\}, \forall t \in \{1,...,|\mathcal S|\}$. Thus, the optimal  schedule $\mathcal S^*$ that minimizes the overall completion time of all UDs is
$\mathcal S^* = \arg\min_{\mathcal S \in \mathbf{S}}\{\mathtt T(\mathcal S)\} = \arg\min_{\mathcal S \in \mathbf{S}} \left \{\max_{u_i \in \mathcal N_w}\left\{ \mathtt T_{u_i}(\mathcal S)\right\}\right \}$, where $\mathbf{S}$ is the set of all possible D2D transmission schedules, i.e.,    $\mathcal{S} \in \mathbf{S}$. This optimal schedule can be formulated as follows

\begin{table}[t!]
	\renewcommand{\arraystretch}{0.9}
	\caption{Variables and parameters of the system}
	\label{table_1}
	\centering
	\begin{tabular}{|p{1.5cm}| p{5.9cm}|}
		\hline
		Variable & Definition\\
		\hline
		\hline
		$\mathcal{N}$ & Set of $N$ UDs\\
		\hline
		$\mathcal{N}_w$ & Set of $N$ UDs that want files\\
		\hline
		$\mathcal{F}$ & Set of $F$ popular files\\
		\hline
		$B$ & File size \\
		\hline
		$\mathtt f_{u_k}$ & The encoded file of of UD $u_k$\\
		\hline 
		$\mathtt u(\mathtt f_{u_k})$ & Set of targeted UDs by UD $u_k$\\
		\hline 
		$\mathcal{A}$ & Set of $A$ transmitting UDs\\
		
		\hline
		$\textbf{C}$ & Set of all achievable capacities\\
		\hline
		$\mathcal{W}_{u_i}$ & Set of wanted files by UD $u_i$\\
		\hline
		$\mathcal{C}_{u_i}$ & Set of locally cached files by UD $u_i$\\
		\hline
		$R_{u_k}$ & Transmission rate of UD $u_k$\\
		\hline
		$T_{u_k}$ & The transmission duration of UD $u_k$\\
		\hline
		$\mathtt T_{u_i}$ & The completion time of UD $u_i$\\
		\hline
		$\mathcal{R}_{u_k}$ & Set of all achievable rates of UD $u_k$\\
		\hline

	\end{tabular}
\end{table}

\begin{theorem}\label{th:pmp}
	The minimum  overall completion time  problem  in a D2D multihop  network can be formulated as a transmission schedule selection problem such that:
	\begin{align} \label{eq:Sopt}
	\mathcal S^* = \arg\min_{\mathcal S \in \mathbf{S}} \left \{\max_{u_i \in \mathcal N_w}\left \{\frac{B.|W_{u_i}(0)|}{\tilde{R}_{u_i}(\mathcal S)} + \mathbb T_{u_i}(\mathcal S)\right \} \right \},
	\end{align}
	where $|W_{u_i}(0)|$ is the initial demand set size of UD $u_k$,  $\mathcal \mathbb T_{u_i}(\mathcal S)$ is the accumulative time delay of UD $u_i$ in schedule $\mathcal S$ and   $\tilde{R}_{u_i}(\mathcal S)$  is the harmonic mean of the  transmission rates of  time indices   that   are instantly decodable for UD   $u_i$ in schedule $\mathcal S$.
\end{theorem}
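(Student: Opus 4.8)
The plan is to derive the claimed expression by decomposing each UD's completion time $\mathtt T_{u_i}(\mathcal S)$ into the two contributions already identified in the text — the accumulated delay from non-targeting slots and the cumulative duration of the instantly-decodable transmissions that target $u_i$ — and then to rewrite the second contribution using the definition of the harmonic mean. First I would observe that the completion time of $u_i$ is the sum of all per-slot durations $T_t$ from $t=1$ until the slot at which $u_i$ empties its demand set. I would partition this set of slots into those in which $u_i$ is targeted (i.e., $u_i \in \mathtt u(\mathtt f(\mathcal A))$) and those in which it is not targeted or is itself transmitting. By Definition 1, the durations summed over the latter group are exactly $\mathbb T_{u_i}(\mathcal S)$, which accounts for the second term in \eref{eq:Sopt}.

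The key combinatorial step is to count the targeting slots. Because the code is instantly decodable, each slot in which $u_i$ is targeted satisfies $|\mathtt f_{u_k}\cap \mathcal W_{u_i}|=1$ and therefore delivers exactly one previously-wanted file to $u_i$. Since $u_i$ begins with $|W_{u_i}(0)|$ wanted files and finishes precisely when all are delivered, the number of targeting slots must equal $|W_{u_i}(0)|$. Let $t_1,\dots,t_{|W_{u_i}(0)|}$ index these slots and let $R(t_j)$ denote the common transmission rate adopted in slot $t_j$; the duration of each such slot is $T_{t_j}=B/R(t_j)$, so the total time spent on instantly-decodable transmissions for $u_i$ is $B\sum_{j} 1/R(t_j)$.

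I would then invoke the definition of the harmonic mean $\tilde R_{u_i}(\mathcal S)=|W_{u_i}(0)|/\sum_j 1/R(t_j)$, which immediately gives $B\sum_j 1/R(t_j)=B\,|W_{u_i}(0)|/\tilde R_{u_i}(\mathcal S)$, the first term of \eref{eq:Sopt}. Adding the two contributions yields $\mathtt T_{u_i}(\mathcal S)=B\,|W_{u_i}(0)|/\tilde R_{u_i}(\mathcal S)+\mathbb T_{u_i}(\mathcal S)$. Substituting this per-UD expression into $\mathtt T(\mathcal S)=\max_{u_i\in\mathcal N_w}\mathtt T_{u_i}(\mathcal S)$ and then into $\mathcal S^*=\arg\min_{\mathcal S\in\mathbf S}\mathtt T(\mathcal S)$ reproduces the stated formulation.

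I expect the main obstacle to be the counting argument that there are exactly $|W_{u_i}(0)|$ targeting slots. This must rely on the instant-decodability property guaranteeing that every targeting transmission removes exactly one file from $\mathcal W_{u_i}$, and on the fact that targeting is well-defined only when $R\le C_{u_k,u_i}$ so that the transmission is actually received. One should also be careful that the harmonic-mean step is meaningful only when $|W_{u_i}(0)|>0$ (otherwise $u_i\notin\mathcal N_w$ and is excluded), and that the common-rate assumption makes $R(t_j)$ unambiguous within each slot.
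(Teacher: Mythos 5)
Your proposal is correct and follows essentially the same route as the paper, which only sketches the argument by deferring to the C-RAN proof in its reference [11]: decompose each UD's completion time into non-targeting delay (matching Definition 1) plus targeting-slot durations, count exactly $|W_{u_i}(0)|$ targeting slots via instant decodability, and rewrite their total duration through the harmonic mean of the adopted rates. Your write-up actually supplies the details the paper omits, so no gap to report.
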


\begin{proof}
	The proof of  \thref{th:pmp} is omitted in this paper becasue we can use  the same steps that was used in  \cite{11} for C-RAN networks. Therefore, a sketch of the proof is given as follows. We first show that the completion time can be expressed as the sum of instantly and non-instantly decodable transmission times from $|\mathcal A|$ transmitters via  D2D links. Afterward, we need to proof that the number of instantly decodable transmissions to UD $u_l$ is equal to the number of its requested files $|\mathcal W_{u_l,0}|$ and
	the number of non-instantly decodable transmissions matches
	the time delay in definition 1. Finally, we extend the results of the optimal schedule in Theorem
	1 in \cite{11} that was used in C-RAN system to the coordinated D2D setting with multiple transmitters.
\end{proof}

The optimal NC transmission schedule that reduces the overall completion time in a D2D network is the solution of the optimization problem in \thref{th:pmp}. Such schedule requires to exploit the heterogeneity of UDs’ channel capacities and the interdependence of UDs’ file reception. Actually, the decision at the current transmission slot is dependent on the future coding situations, which makes the optimization problem anti-causal. Therefore, it can be inferred that finding the optimal schedule $\mathcal S^*$ is intractable \cite{12}, \cite{15}.

\begin{figure}[t]
	\centering
	\includegraphics[width=1\linewidth]{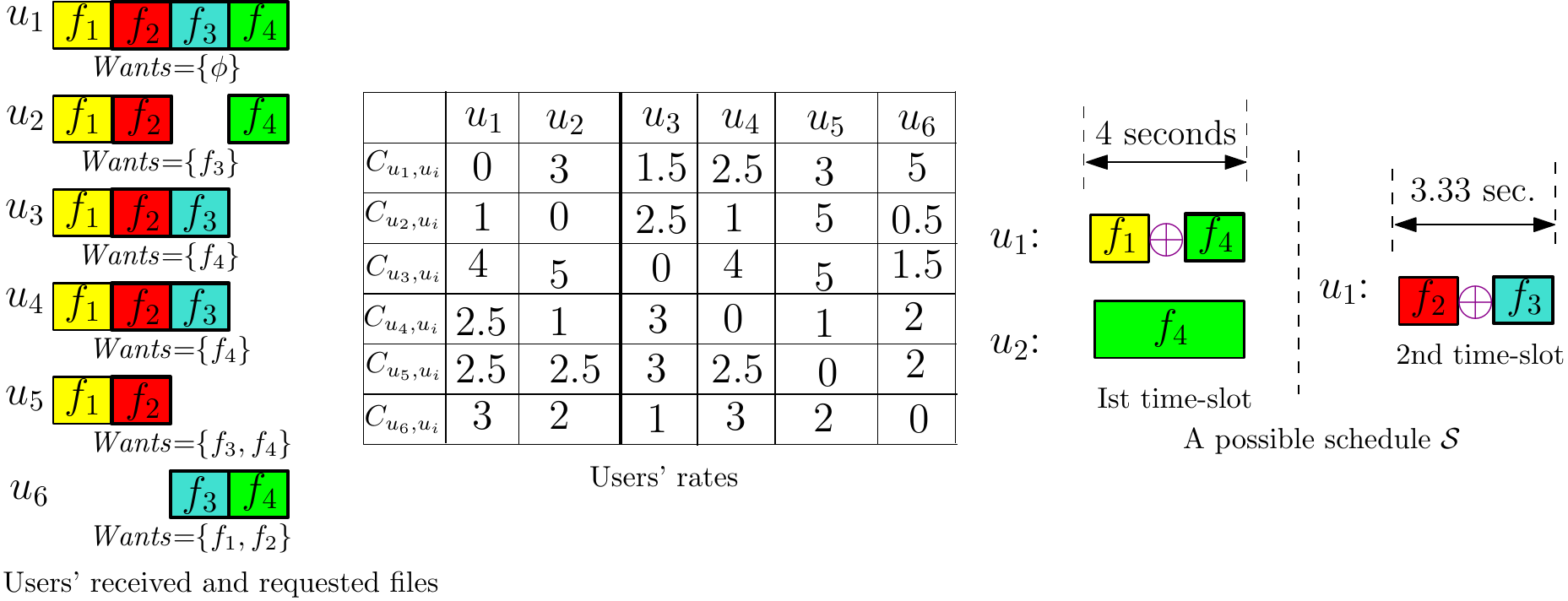}
	\caption{D2D system containing $6$ UDs and their corresponding requested/received files and rates. For example, $u_2$ receives $f_1$, $f_4$ and requests $f_2$, $f_3$. The sets of files that locally cached at UD $u_1$ is: $\mathcal C_{u_1}=\{f_1, f_4, f_3\}$.}
	\label{fig33}
\end{figure}

\subsection{Example of RA-IDNC Transmissions in D2D System} This example illustrates the aforementioned definitions and concepts to ease the analysis of the completion time minimization problem reformulation in next section. Consider a simple D2D network that shown in  Fig. \ref{fig33} which consists of $6$ users, users' received and requested files and their rates.  For example, $u_2$ receives $f_1$, $f_4$ and requests $f_2$, $f_3$. Each file is assumed to have a size of $10$ bits. To minimize the completion time for this example, one possible schedule is given as follows.\\
\textbf{First time slot:} The $u_1$-th and $u_2$-th UDs can use their cached files to transmit $\mathtt f_{u_1}=f_1 \oplus f_4$ and $\mathtt f_{u_2}=f_4$ with rates $R_{u_1}=2.5$ and $R_{u_2}=2.5$ bits/s, respectively, to the sets  $\mathtt u(\mathtt f_{u_1})=\{u_4,u_6\}$ and $\mathtt u(\mathtt f_{u_2})=\{u_3,u_5\}$. Given this, we have the following transmission durations of $u_1$-th UD and $u_2$-th UD, respectively: $T_{u_1}=\frac{10}{2.5} = 4, T_{u_2}= \frac{10}{2.5} = 4$ seconds. The decoding process at UDs side can be explained as follows.
\begin{itemize} 
			\item The $u_4$-th UD already has $f_1$, so it can XOR the combination ($f_1\oplus f_4$) with $f_1$ (i.e., $(f_1\oplus f_4)\oplus f_1$) to retrieve $f_4$. Thus, the transmission is instantly decodable for UD $u_4$.
			
			\item The $u_6$-th UD already has $f_4$, so it can XOR the combination ($f_1\oplus f_4$) with $f_4$ (i.e., $(f_1\oplus f_4)\oplus f_4$) to retrieve $f_1$. Thus, the transmission is instantly decodable for UD $u_6$.
			
			\item The $u_3$-th and $u_5$-th UDs can receive $f_4$ from $u_2$-th UD. Thus, the transmission is instantly decodable for UDs $u_3$ and $u_5$.
			
\end{itemize}	
Therefore, the updated \textit{demand} sets after the first time slot are: $\mathcal W_{u_2} = \{f_3\}$, $\mathcal W_{u_3} = \varnothing, \mathcal W_{u_4} = \varnothing$, $\mathcal W_{u_5} = \{f_3\}$, $\mathcal W_{u_6} = \{f_2\}$. Note that $T_{t,1}= 4$ seconds.\\
\textbf{Second time slot:} The $u_1$-th UD can use its cached files to transmit $\mathtt f_{u_1}=f_2 \oplus f_3$ with rate $R_{u_1}=3$ bits/s to the set $\mathtt u(\mathtt f_{e_2})=\{u_2,u_5, u_6\}$ which requires transmission time  $T_{u_1}=T_{t,2}=\frac{10}{3} = 3.33$ seconds.
The decoding process at UDs side can be explained as follows.
\begin{itemize} 
	\item The $u_2$-th UD already has $f_2$, so it can XOR the combination ($f_2\oplus f_3$) with $f_2$ (i.e., $(f_2\oplus f_3)\oplus f_2$) to retrieve $f_3$. Thus, the transmission is instantly decodable for UD $u_2$.
	
	\item The $u_5$-th UD already has $f_2$, so it can XOR the combination ($f_2\oplus f_3$) with $f_3$ (i.e., $(f_2\oplus f_3)\oplus f_2$) to retrieve $f_3$. Thus, the transmission is instantly decodable for UD $u_5$.
	
		\item The $u_6$-th UD already has $f_2$, so it can XOR the combination ($f_2\oplus f_3$) with $f_2$ (i.e., $(f_2\oplus f_3)\oplus f_2$) to retrieve $f_3$. Thus, the transmission is instantly decodable for UD $u_6$.
	
\end{itemize}	
By the end of second time slot, all UDs will have their requested files. Therefore, the total transmission time is $ T_{t,1} + T_{t,2}=4+3.33=7.33$ seconds.	

\ignore{
Let the transmission schedule $\mathcal{S}_2$ of the work proposed in [16] be:
	\begin{enumerate}
		\item \textbf{First time slot:} eRRH$_1$ and eRRH$_2$ adopt the transmission rate to $2.5$ bits/s to transmit coded  file $\mathtt f^c_1=f_1 \oplus f_3$ to serve users $6$, $5$, and transmit uncoded  file $f_2$ to serve user $2$, respectively. User $1$ can send $f_4$ to user $4$ with the same adopted eRRHs' transmission rate of $2.5$ bits/s. The  transmission time is $T^{c}_1=T^{c}_2=T^{d2d}_{1}=\frac{10}{2.5} = 4$ seconds. Subsequently, the \textit{Wants} sets are: $\mathcal W_2 = \{f_3\}$, $\mathcal W_3 = \{f_4\}, \mathcal W_4 = \{\varnothing\}$, $\mathcal W_5 = \{f_4\}$, $\mathcal W_6 = \{f_2\}$.
		
		\item \textbf{Second time slot:} eRRH$_1$ and eRRH$_2$ adopt the transmission rate to $1.5$ bits/s to transmit coded  file $\mathtt f^c_1=f_2 \oplus f_4$ to serve  users $3$, $5$, $6$, and transmit uncoded file $f_3$ to serve  user $2$, respectively. The  transmission time $T_2$ is $\frac{10}{1.5} = 6.666$ seconds. Subsequently, all  users have received their wanted files.
	\end{enumerate}
Schedule $\mathcal{S}_2$ requires total transmission time of $ T^1_{\max}+ T^2_{\max}= 10.666$ seconds. With all these results, it can be concluded that schedule $\mathcal{S}_1$ minimizes the completion time in seconds compared to schedule $\mathcal{S}_2$.}
The above example demonstrates the benefit of NC and D2D communications in minimizing the completion time. We can  further improve this result by allocating the power levels efficiently to the transmitting UDs.

\section{Completion Time Approximation and
	Problem Reformulation}\label{PF}
Following \cite{11, 15}, we approximate the completion time in \thref{th:pmp}  to select  a set of transmitting UDs, file combinations, and  transmission rates/powers at each transmission slot $t$ without going through all future possible  coding decisions. To achieve this, at each transmission slot $t$, a lower bound on the  completion times  of all UDs is computed. This lower bound is computed separately for each UD and does not require to exploit  the interdependence of UDs' file reception and channel capacities. In fact, this lower bound metric facilitates the  mapping of   the transmission schedule selection problem in \eqref{eq:Sopt} into   an online maximal independent set selection problem.

\begin{corollary} \label{cor:LBcompletion}
	A  lower bound on completion time $\bar{\mathtt T}_i(t)$ of UD $u_i \in \mathcal N_w$ in a given  time index $t$ can be approximated  as
	\begin{align}\label{eq:Aig}
	\bar{\mathtt T}_{u_i}(t) \approx  \frac{B.|W_{u_i}(0)|}{\tilde{R}_{u_i}}  + \mathbb T_{u_i}(t),
	\end{align}
	where $\mathbb T_{u_i}(t)$ is the accumulative time delay experienced by  UD $u_i$ until time index $t$ and  $\tilde{R}_{u_i}$ is the harmonic mean of the channel  capacities from all UDs.
\end{corollary}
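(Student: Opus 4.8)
The plan is to obtain the bound by specializing the exact completion-time expression of \thref{th:pmp} to the system state observed at slot $t$, replacing the two schedule-dependent quantities by surrogates that are computable without anticipating future coding decisions. From \thref{th:pmp}, the completion time of UD $u_i$ under any schedule $\mathcal S$ decomposes as $\mathtt T_{u_i}(\mathcal S) = \frac{B \cdot |W_{u_i}(0)|}{\tilde R_{u_i}(\mathcal S)} + \mathbb T_{u_i}(\mathcal S)$, so it suffices to treat the instantly-decodable transmission time and the accumulated delay separately, bounding each using only information available at slot $t$.

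First I would dispose of the accumulated-delay term. By Definition~1, $\mathbb T_{u_i}(\cdot)$ is monotonically non-decreasing in the transmission index, since every slot in which $u_i$ is either untargeted or acting as a transmitter contributes $T_t \geq 0$ and no slot ever removes delay. Hence for any schedule $\mathcal S$ agreeing with the realized history through slot $t$ we have $\mathbb T_{u_i}(\mathcal S) \geq \mathbb T_{u_i}(t)$, and the already-incurred delay $\mathbb T_{u_i}(t)$ is a known quantity. This supplies the second summand of \eqref{eq:Aig} directly, as a genuine lower bound.

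Next I would handle the instantly-decodable time $\frac{B \cdot |W_{u_i}(0)|}{\tilde R_{u_i}(\mathcal S)}$. The structural fact inherited from the proof of \thref{th:pmp} is that $u_i$ needs exactly $|W_{u_i}(0)|$ instantly decodable receptions, one per requested file. Each such reception occurs at a rate that, by the targeting rule $R_{u_k} \leq C_{u_k,u_i}$, cannot exceed the serving link's capacity. Writing the total as $\sum_j \frac{B}{R_j} = \frac{B \cdot |W_{u_i}(0)|}{\tilde R_{u_i}(\mathcal S)}$, with $\tilde R_{u_i}(\mathcal S)$ the harmonic mean of those per-file rates, the time is smallest when each rate is largest. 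Since the identity of the serving UD in each future slot is unknown at slot $t$, I would approximate this best achievable harmonic-mean rate by $\tilde R_{u_i}$, the harmonic mean of the capacities $\{C_{u_k,u_i}\}$ over the candidate transmitters $u_k \in \mathcal N$, which captures the channel heterogeneity seen by $u_i$ without committing to a schedule. Substituting $\tilde R_{u_i}$ for $\tilde R_{u_i}(\mathcal S)$ yields the first summand.

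The main obstacle is precisely this last substitution: because the future serving assignment is anti-causal, the all-UD harmonic mean is an \emph{approximation} rather than a strict lower bound — a strict bound would instead use $\max_{u_k} C_{u_k,u_i}$, and a schedule that opportunistically uses only the strongest links could realize $\tilde R_{u_i}(\mathcal S) > \tilde R_{u_i}$. I would therefore present the result with the $\approx$ of \eqref{eq:Aig}, arguing that the harmonic mean over all links is the appropriate causal surrogate: it averages over the UDs that may plausibly serve $u_i$ in subsequent slots and so tracks the realized per-file rates more faithfully than the optimistic maximum. Combining the two summands gives $\bar{\mathtt T}_{u_i}(t) \approx \frac{B \cdot |W_{u_i}(0)|}{\tilde R_{u_i}} + \mathbb T_{u_i}(t)$, and this per-UD quantity depends only on $|W_{u_i}(0)|$, the current delay, and the column of the CSM indexed by $u_i$ — exactly the causally available data that the online graph construction of \sref{GC} consumes.
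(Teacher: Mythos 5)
Your proposal is correct and follows essentially the same route as the paper: it specializes the schedule-dependent expression of \thref{th:pmp} at slot $t$, lower-bounds the delay term via the best-case/no-further-delay (monotonicity) argument so that $\mathbb T_{u_i}(\mathcal S)$ becomes $\mathbb T_{u_i}(t)$, and replaces $\tilde R_{u_i}(\mathcal S)$ by the harmonic mean of the channel capacities from all UDs, correctly flagging that this second substitution is only an approximation (exact when $u_i$ receives an equal number of files from each UD at capacity). Your added remark that a strict bound would require $\max_{u_k} C_{u_k,u_i}$ is a fair refinement of the paper's own caveat, but it does not change the argument.
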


\begin{proof}
	The expression in  \eref{eq:Aig} matches the expression  in \thref{th:pmp}, except $\mathbb T_{u_i}(\mathcal S)$ and $\tilde{R}_{u_i}(\mathcal S)$ of  \thref{th:pmp} are replaced by $\mathbb T_{u_i}(t)$ and $\tilde{R}_{u_i}$, respectively. The best case scenario is that  all   transmissions starting from  time slot $t$ are instantly decodable for  UD  $u_i$. Thus, it  experiences no further time delay, i.e., $\mathbb T_{u_i}(\mathcal S) = \mathbb T_{u_i}(t)$. In addition, since a fully connected D2D model is adopted, UD $u_i$ can receive  a missing file  from any other UD until it receives all $F$ files. Therefore,  $\tilde{R}_{u_i}(\mathcal S)$ is replaced  by $\tilde{R}_{u_i}$, where $\tilde{R}_{u_i}$ is the harmonic  mean of the channel  capacities from all other  UDs to UD $u_i$. This is an approximation as $\tilde{R}_{u_i}$ is exactly equal to $\tilde{R}_{u_i}(\mathcal S)$ if UD  $u_i$ receives an equal number of files from other UDs with the rates of channel capacities.
\end{proof}

Using the approximated completion time \eref{eq:Aig} at each transmission slot $t$, we now ready to reformulate the completion time minimization problem in \thref{th:pmp} with the aim to develop a
cross-layer network coding framework that decides the set of transmitting UDs $\mathcal A$ for sending $\mathtt f_{u_k}$ to the UDs $\mathtt u(\mathtt f_{u_k})$, and their transmission
rate/power $\{R_{u_k}, Q_{u_k}\}$, $\forall u_k\in \mathcal A$. As
such, all files are delivered to all UDs with minimum completion time. Therefore, the completion time minimization problem in fully D2D connected system can be formulated as
\begin{subequations}
	\begin{align} \label{eqn:Mschedule}
	&\rm P1: \hspace{0.2cm} \min_{\substack{\mathtt f_{u_k}, r_{u_k}, Q_{u_k}\\ \mathcal  A\in \mathcal P(\mathcal{N})
	}} \left \{\max_{u_i \in \mathcal N_w} \bar{\mathtt T}_{u_i}(t)\right \}\\ \nonumber
	&\rm subject~to 
	\begin{cases}
	\text{(C1):}\hspace{0.1cm} \mathtt u(\mathtt f_{u_k}) \cap \mathtt u(\mathtt f_{u_m}) =\varnothing, \forall u_k \neq {u_m} \in \mathcal A,\\
	\text{(C2):}\hspace{0.1cm} \mathtt f_{u_k}\subseteq \mathcal P( \mathcal{H}_{u_k}), ~\forall u_k\in \mathcal A,  \\
	\text{(C3):}\hspace{0.1cm} 0\leq Q_{u_k}\leq Q_{\max}, ~\forall u_k\in \mathcal A,\\ \text{(C4):}\hspace{0.1cm} R_{u_k}\geq R_{\text{th}},  \forall u_k\in \mathcal A,
	\end{cases}
	\end{align}
\end{subequations}
where (C1) states that the sets of targeted UDs from all transmitting UDs
are disjoint, i.e., each UD must be scheduled to only one transmitting UD; (C2) ensures that all files to be combined using
XOR operation at each transmitting UD $u_k$ are stored in its \textit{cache}; (C3) bounds the maximum transmit power of transmitting UDs, and (C4) guarantees the minimum transmission rate $R_\text{th}$ required to meet the QoS rate requirements.

The optimization variables in (\rm P1) contain the NC scheduling parameters  $\mathtt u(\mathtt f_{u_k})$,  potential set of transmitting UDs $\mathcal A$, and their  adopted power allocations. It can be seen that problem (\rm P1) is intractable. However, by analyzing the problem, next section successfully transforms it into MWIS problem  using graph theory technique.

\section{Graph Construction and Problem Transformation}\label{GC}
The formulated problem in  ($\rm P1$) is similar to 
MWIS problems in several aspects. In MWIS, two vertices should be non-adjacent in the graph,
and similarly, in problem ($\rm P1$), same UD
cannot be scheduled to two different UDs (i.e., C1). Moreover, the objective of problem ($\rm P1$) is to minimize the maximum completion time, and similarly, the goal of MWIS is to maximize the number of vertices that have high weights. Therefore, the feasible NC schedules can be considered to be the MWISs. Consequently, we focus on graph-based methods, and in what follows, we will construct a graph that allows us to transform problem ($\rm P1$) into MWIS-based problem.

\subsection{D2D Rate-Aware IDNC Graph} \label{Graph}
In this sub-section, we construct a weighted undirected graph, referred to D2D-RA-IDNC graph, that considers all possible conflicts for scheduling UDs, such as transmission, network coding, and transmission rate. Let $\mathcal{G}(\mathcal V, \mathcal E)$ represent the D2D-RA-IDNC graph where $\mathcal V$, $\mathcal E$ stand for the set of all the vertices and the edges, respectively. In order to construct $\mathcal{G}$, we need first to generate the vertices and connect them.

Let $\mathcal N_{w}\subset \mathcal N$ denote the set of UDs that still wants some files. Hence, the D2D-RA-IDNC graph is designed by generating all vertices for the $u_k$-th possible transmitting UD, $\forall u_k \in \mathcal N$. The vertex set $\mathcal V$ of the entire graph is the union of vertices of all UDs. Consider, for now, generating the vertices  of UD $u_k$. Note that transmitting UD  $u_k$ can encode its IDNC file $\mathtt f_{u_k}$ using  its previously received files $\mathcal C_{u_k}$. Therefore, each
vertex is generated for each single file $f_h\in \mathcal W_{u_i}\cap \mathcal C_{u_k}$ that is requested by each UD $u_i\in \mathcal N_{w}$ and for each achievable rate $r$ of UD $u_k$ that is defined below.\\
\textit{\textbf{Definition 2:}
	The set of achievable rates $\mathcal R_{u_k,u_i}$ from UD $u_k$ to UD $u_i$ is  a subset of achievable rates $\mathcal R_{u_k}$ that are less than or equal to channel  capacity  $r_{u_k,u_i}$. It can be expressed by  $\mathcal R_{u_k,u_i} = \{r \in \mathcal R_{u_k}| r \leq C_{u_k,u_i}~ \text{and} ~u_i\in \mathcal N_\text{w}\}$.}

The above definition emphasizes that $u_i$-th UD can  receive a file from transmitting UD $u_k$ if the adopted  transmission rate $r$ is in the achievable set $R_{u_k,u_i}$. Therefore, we generate $|\mathcal R_{u_k,u_i}|$ vertices for a requesting  file  $f_h \in \mathcal C_{u_k}  \cap \mathcal W_{u_i}, \forall u_i \in  \mathcal N_\text{w}$. In summery,  a vertex $v_{r,i,f}^k$ is generated  for each association of transmitting UD $u_k$, a  rate  $r \in \mathcal R_{u_k,u_i}$, and  a requesting file  $f_h \in \mathcal C_{u_k}  \cap \mathcal W_{u_i}$ of user $u_i \in  \mathcal N_\text{w}$. Similarly, we generate all vertices  for all UDs in $\mathcal N$.

Given the above generated vertices, in what follows, we connect them to construct the D2D-RA-IDNC graph. All possible conflict connections  between vertices (conflict edges between circles) in the D2D-RA-IDNC graph are provided as follows. Two vertices $v_{r,i,h}^k$ and $v_{r',i',h'}^{k}$ representing the same
transmitting UD $u_k$ are linked with a coding-conflict edge if the resulting
combination violate the instant decodability constraint.
This event occurs if one of the following holds.
\begin{itemize}
	\item The combination is not-instantly decodable, i.e., $f_h\neq f_{h^\prime}$ and ($f_{h},f_{h^\prime}$) $\notin \mathcal C_{u_{k^\prime}}\times \mathcal C_{u_{k}}$.
	\item The transmission rate is different, i.e., $r\neq r'$.
\end{itemize}

Similarly, two vertices $v_{r,i,h}^k$ and $v_{r',i',h'}^{k'}$ representing different transmitting UDs $u_k\neq u_{k'}$ are conflicting if

\begin{itemize}
	\item The transmission rate is different, i.e., $r\neq r'$.
	\item The same UD is scheduled, i.e., $u_i=u_{i^\prime}$.
\end{itemize}

Therefore, two vertices $v_{r,i,h}^k$ and $v_{r',i',h'}^{k'}$ are adjacent
by a conflict edge in $\mathcal E$ if they satisfy one of the following connectivity conditions (CC).
\begin{itemize}
	\item  CC1: $u_k=u_{k^\prime}$ and ($f_h\neq f_{h^\prime}$) and ($f_{h},f_{h^\prime}$) $\notin \mathcal C_{u_{k^\prime}}\times \mathcal C_{u_{k}}$.
	\item  CC2:  $r\neq r'$. 
	\item  CC3:  $u_k \neq u_{k^\prime}$ and $u_i=u_{i^\prime}$. 
\end{itemize}

\subsection{Problem Transformation}\label{PT}
In this sub-section, we transform the network-coded user scheduling and power optimization problem ($\rm P1$) into MWIS problem, and consequently, we start by the following definitions.\\
\textit{\textbf{Definition 3:} Any independent set (IS) $\mathcal I$ in graph $\mathcal G$ must satisfies: i) $\mathcal I_i \subseteq \mathcal G$; ii) $\forall v, v' \in \mathcal I_i$, we have $(v, v') \notin \mathcal E$.}\\
\textit{\textbf{Definition 4:} A maximal IS in an undirected graph cannot be expanded to	add one more vertex without affecting the pairwise non-adjacent vertices.}\\
\textit{\textbf{Definition 5:} The independent set $\mathcal I$ is referred to as an MWIS of $\mathcal G$ if it satisfies: i) $\mathcal I $ is an IS in graph $\mathcal G$; ii) the sum weights of the vertices in $\mathcal I$ offers the maximum among all ISs of $\mathcal G$. Therefore, the MWIS will be denoted as $\mathtt I$.}\\
Based on the aforementioned designed D2D-RA-IDNC graph and definition of MWIS, we have the following proposition.
\begin{proposition} \label{thm:Mweight1}
The problem of minimizing the approximated completion time in ($\rm P1$) at the $t$-th transmission  is equivalently represented by the MWIS selection  among all the ISs
in the $\mathcal G$ graph, where the original weight $\omega_o(v)$ of each vertex $v$ is given by
\begin{align} \label{eqw}
	\omega_o(v)=2^{N_w-d_{u_i}+1}\bar{\mathtt T}_{u_i}(t)\left(\frac{r}{B}\right),
	\end{align}
	where $d_{u_i}$ is the  order of UD $u_k$  in the group that arranges  all  UDs in $\mathcal N_w(t)$ in decreasing order of  lower bound on completion times \cite{15}.
\end{proposition}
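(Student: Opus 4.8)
The plan is to establish the claimed equivalence in two stages: a \emph{feasibility} correspondence between the independent sets of $\mathcal G$ and the schedules admissible in $(\rm P1)$, followed by an \emph{objective} correspondence showing that the maximum-weight such set is exactly the min--max-optimal schedule. For Stage~1, I would argue that each vertex $v_{r,i,h}^k$ encodes one atomic decision---transmitter $u_k$ delivers file $f_h$ to receiver $u_i$ at rate $r$---and that a collection of such decisions is jointly admissible under (C1)--(C4) if and only if the corresponding vertices form an independent set. This reduces to checking that the three connectivity conditions are precisely the infeasibility conditions: CC3 forbids assigning one receiver to two transmitters and thus enforces (C1); CC1 forbids any combination at a common transmitter that is not instantly decodable and, together with the generation rule $f_h\in\mathcal C_{u_k}$, enforces (C2); and CC2 forbids mixing rates, enforcing the common-rate synchronization requirement. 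Since vertices are generated only for rates in $\mathcal R_{u_k,u_i}$ (Definition~2) with $r\ge R_{\text{th}}$, constraint (C4) is built in, while (C3) is deferred to the separate power step. Hence the independent sets of $\mathcal G$ correspond one-to-one with the feasible targeting/rate decisions at slot $t$.

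For Stage~2 I would first rewrite the per-UD cost. By Definition~1 and \corref{cor:LBcompletion}, targeting $u_i$ leaves its bound unchanged whereas not targeting it inflates it by $T_t=B/r$; that is, $\bar{\mathtt T}_{u_i}(t)=\bar{\mathtt T}_{u_i}(t-1)+\chi_i\,B/r$ with $\chi_i=1$ iff $u_i$ is not targeted. Minimizing $\max_{u_i}\bar{\mathtt T}_{u_i}(t)$ therefore amounts to targeting, for each admissible rate, the feasible set that serves the UDs of highest completion-time bound, where this priority is the rank $d_{u_i}$ obtained by sorting the UDs of $\mathcal N_w$ in decreasing $\bar{\mathtt T}_{u_i}(t)$.

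The crux is a lexicographic-dominance property of the chosen weights. Writing $u_{(d)}$ for the UD of rank $d$, so that $\bar{\mathtt T}_{u_{(1)}}(t)\ge\bar{\mathtt T}_{u_{(2)}}(t)\ge\cdots$, I would show that a single higher-priority UD outweighs the entire weight of all lower-priority UDs combined,
\begin{align*}
2^{\,N_w-d+1}\,\bar{\mathtt T}_{u_{(d)}}(t)\;>\;\sum_{d'>d} 2^{\,N_w-d'+1}\,\bar{\mathtt T}_{u_{(d')}}(t),
\end{align*}
which holds because $\bar{\mathtt T}_{u_{(d')}}(t)\le\bar{\mathtt T}_{u_{(d)}}(t)$ for $d'>d$ and $\sum_{d'>d}2^{\,N_w-d'+1}=2^{\,N_w-d+1}-2<2^{\,N_w-d+1}$, the shared factor $r/B$ cancelling at a fixed rate. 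This inequality forces any weight-maximizing independent set to include the highest-priority feasible UD first, then the next, and so on---exactly the greedy rule that minimizes the maximum bound.

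Finally, for the rate dimension the factor $r/B=1/T_t$ rewards higher rates, since a larger $r$ shortens the slot and hence the delay suffered by every non-targeted UD, and this gain is traded off against the shrinking achievable sets $\mathcal R_{u_k,u_i}$ at high rates. Combining the two stages, the maximum-weight independent set of $\mathcal G$ under $\omega_o$ coincides with the feasible schedule minimizing $\max_{u_i}\bar{\mathtt T}_{u_i}(t)$, which is the assertion of \pref{thm:Mweight1}. The main obstacle I anticipate is making the objective correspondence airtight: proving that the ordering induced by the weights reproduces the min--max optimum across \emph{all} admissible rates at once---rather than only within a fixed rate---so that the factor $r/B$ correctly arbitrates the rate-versus-coverage trade-off against the discrete selection imposed by the graph.
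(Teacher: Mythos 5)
Your proposal follows essentially the same two-stage route as the paper's proof: first a bijection between (maximal) independent sets of $\mathcal G$ and feasible joint decisions at slot $t$ (with CC1--CC3 mirroring the constraints of $\rm P1$, (C4) absorbed into the vertex-generation rule, and (C3) deferred to the power step), then an identification of the IS weight with the scheduling objective. Where you differ is in how much of Stage~2 you actually carry out: the paper simply asserts the weighted-sum identity in \eref{eq:Vwei} and delegates the justification of the exponential weight $2^{N_w-d_{u_i}+1}$ to \cite{11}, \cite{15}, whereas you supply the key lexicographic-dominance inequality $\sum_{d'>d}2^{N_w-d'+1}\bar{\mathtt T}_{u_{(d')}}(t)\le\bigl(2^{N_w-d+1}-2\bigr)\bar{\mathtt T}_{u_{(d)}}(t)<2^{N_w-d+1}\bar{\mathtt T}_{u_{(d)}}(t)$ and the observation that, at a fixed common rate, lexicographic maximization maximizes the smallest non-targeted rank and hence minimizes the post-slot maximum bound; this is a genuine strengthening of the written argument. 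You are also right to flag the residual gap: the factor $r/B$ arbitrates the rate-versus-coverage trade-off only as a surrogate, so the claimed ``equivalence'' across all admissible rates is heuristic rather than exact, and the paper's own proof does not close this either. In short, your proof is at least as complete as the paper's, and your honest identification of the cross-rate step as the weak point is accurate.
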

\begin{proof}  The proof of the proposition follows similar steps of \cite{11}, and consequently, the detailed steps are omitted. Accordingly, a sketch of proof is provided as follows. First, we need to sufficiently show that there is a mapping between the set of maximal ISs in the D2D-RA-IDNC graph and the set of feasible transmissions. Then, the weight of each IS is the objective function to P1. The authors in \cite{11} showed that there exists a one-to-one mapping between the set of feasible transmissions and the set of ISs in the RA-IDNC graph. Here, we extend the results of \cite{11} to the D2D-RA-IDNC graph by showing that the feasible transmissions between different transmissions are non-adjacent, i.e., the constraint CC3. Since each feasible transmission by a transmitting UD is an IS and they are non-adjacent, then the union of both sets is also an IS. From CC3, the same UD cannot be targeted by distinct transmitting UDs. Therefore, all vertices in the sub-graph representing transmitting UD $u_k$ are non-adjacent to vertices in the sub-graph of transmitting UD $u_{k^\prime}$ as long as the targeted UDs are distinct. Therefore, each feasible association between targeted UDs-transmitting UDs, file combinations, and the transmission rate is represented by a maximal IS. Conversely, it can readily be seen that each IS represents a feasible condition as it does not violate the connectivity conditions CC1, CC2, and CC3. Indeed, for $\mathcal{I}$, the transmission of the combination $\mathtt u_{u_k} = \oplus_{v^k_{r,i,h} \in \mathcal{I}} f$ by transmitting UD $u_k$ at rate $r$ is instantly for all UDs $\mathtt u (\mathtt f_{u_k}) = \cup_{v^k_{r,i,h} \in \mathcal I} u$.
	
To finish the proof, we show that the weight of the IS is the objective function of ($\rm P1$). Let the weight of vertex $v^k_{r,i,h}$ be defined as in \eref{eqw} and $\mathcal I$ be the set of maximal ISs in the D2D-RA-IDNC graph $\mathcal G$. Consider $\mathtt I \in \mathcal I$ is the MWIS that has the maximum vertex weights. By the designed graph $\mathcal G$,  all the feasible decisions of transmitting UDs, transmitted files and transmission rates/powers are mapped to the set of all maximal ISs. Consequently, the completion time reduction problem can be reformulated as a maximal IS selection problem in graph $\mathcal G$ such as	
	\begin{align} \label{eq:Vwei}
	&\arg \max_{\substack{\mathcal A\in \mathcal P(\mathcal N)\ignore{\\ u_k \in \mathcal A} \\ \mathtt f_{u_k} \in \mathcal{P}(\mathcal{C}_{u_k}) \\ Q_{u_k}\in\{0, \cdots, Q_{\max}\}\\ r \in \mathcal{R}_{u_k}}} \sum_{u_i \in \mathcal X} 2^{N_w-d_{u_i}+1}\bar{\mathtt T}_{u_i}(t)\left(\frac{r}{B}\right) \nonumber \\& = \max_{\substack{\mathtt I\in \mathcal I}}\sum_{v \in \mathtt I} 2^{N_w-d_{u_i}+1}\bar{\mathtt T}_{u_i}(t)\left(\frac{r}{B}\right)= \max_{\substack{\mathtt I\in \mathcal I}}\sum_{v \in \mathtt I}\omega_o(v).
	\end{align}
Consequently, the problem of choosing transmitting UDs, file combinations, and transmission rates/powers that results in minimizing the completion time is equivalent to the MWIS selection problem over the D2D-RA-IDNC graph.
\end{proof} 

It is readily known that finding the MWIS is NP-complete problem \cite{16}. Consequently, solving \pref{thm:Mweight1} is NP-hard. In the next section, we greedily select a maximal IS using the vertices’ weights defined in \eref{eqw}.

\section{Proposed Solution}\label{PS}
In this section, we develop an efficient cross-layer network coding solution  that judiciously selects multiple transmitting UDs simultaneously and their coding decisions and transmitting rates/powers. As shown in SINR expression,  the increase in the number of transmitting UDs also increases interference of a transmission channel caused by multiple transmitting UDs and therefore, reduces the  channel capacity. To control the deleterious impact of interference on channel capacities, a power allocation mechanism is employed that efficiently selects the set of transmitting UDs and  allocates the transmitting power to the transmitting UDs such that: (1) a potential number of UDs can be targeted with an IDNC combination, and (2) the channel capacities of the transmitting UDs to the targeted UDs still improves the objective function. The overall steps of our proposed solution are as follows. We first present a power allocation algorithm for the given set of transmitting UDs and the scheduled/targeted UDs to these transmitting UDs. Next, we provide a greedy algorithm that selects a set of transmitting and targeted UDs considering known/predefined power allocations. Finally, by combining the aforementioned algorithms, we present an innovative cross-layer NC solution.

\subsection{Transmit Power Allocation Algorithm}
In this sub-section, we derive optimal power allocations to maximize sum-throughput for a given set of transmitting UDs. We assume that the system has $A$ transmitting UDs, i.e., $\mathcal A=\left\{1,2, \cdots,A\right\}$ and the UDs receiving data from the $u_k$-th transmitting UD is denoted by the set $\mathtt u(\mathtt f_{u_k})$. The power optimization problem to maximize the sum-capacity of $A$ transmitting UDs is formulated as
\begin{equation}
\label{Power_opt}
\begin{split}
\max_{\{Q_k\}} \sum_{k=1}^A \mathcal N_k \quad \text{s.t.} \quad 0 \leq Q_k \leq  Q_{\max}, \forall k
\end{split}
\end{equation}
where $\mathcal N_k=\sum_{ u_i \in \mathtt u(\mathtt f_{u_k})} \log_2\left(1+\text{SINR}_{u_k,u_i}\right)$. The near-optimal power allocation for the $u_k$-th transmitting UD is obtained in the following proposition.

\begin{proposition}
	\label{Power_Prop}
	Let $\widehat{Q}_k$ be the given transmit power of the $k$-th transmitting UD at the $t$-th iteration. A converged power allocation is obtained by updating power at the $(t+1)$-th iteration, $\forall t$, according to the following power update equation
	\begin{align} 
	\label{Power_update}
	Q_k=\left[\frac{\sum_{ u_i \in \mathtt u(\mathtt f_{u_k})}\frac{\text{SINR}_{u_k,u_i}}{1+\text{SINR}_{u_k,u_i}}}{\sum_{{\substack{m=1\\ m\neq k} }} \sum_{ u_j \in \mathtt u(\mathtt f_{u_m})}\left(\frac{\text{SINR}_{u_m,u_j}}{1+\text{SINR}_{u_m,u_j}}\right)^2\frac{\gamma_{u_k,u_i}}{\widehat{Q}_m\gamma_{u_m,u_j}}}\right]_{0}^{Q_{max}}
	\end{align}
\end{proposition}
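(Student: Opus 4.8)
The plan is to treat the power problem \eqref{Power_opt} as a smooth but non-concave maximization of the sum-rate $U(\mathbf{Q})=\sum_{k=1}^{A}\mathcal N_k$ over the box $[0,Q_{\max}]^{A}$, and to derive \eqref{Power_update} as a fixed-point form of its first-order (KKT) stationarity conditions. First I would form the Lagrangian of \eqref{Power_opt} with one multiplier pair per box constraint and compute $\partial U/\partial Q_k$. The crucial bookkeeping step is to split this derivative into a \emph{useful-signal} part arising from the $|\mathtt u(\mathtt f_{u_k})|$ rate terms of transmitter $u_k$ itself, and an \emph{interference} part arising from every link $(u_m,u_j)$ with $m\neq k$ in whose SINR denominator $Q_k$ appears.

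I would then evaluate the two pieces explicitly. Using $\text{SINR}_{u_k,u_i}=Q_k|\gamma_{u_k,u_i}|^{2}/\big(N_0+\sum_{m\neq k}Q_m|\gamma_{u_m,u_i}|^{2}\big)$, the chain rule gives $\partial\log_2(1+\text{SINR}_{u_k,u_i})/\partial Q_k=\tfrac{1}{Q_k\ln 2}\,\text{SINR}_{u_k,u_i}/(1+\text{SINR}_{u_k,u_i})$, so summing over $u_i\in\mathtt u(\mathtt f_{u_k})$ reproduces the numerator of \eqref{Power_update} up to the common factor $1/(Q_k\ln 2)$. Differentiating an interference term $\log_2(1+\text{SINR}_{u_m,u_j})$ through its denominator yields a negative contribution whose exact coefficient is $\tfrac{\text{SINR}_{u_m,u_j}^{2}}{1+\text{SINR}_{u_m,u_j}}\,|\gamma_{u_k,u_j}|^{2}/(Q_m|\gamma_{u_m,u_j}|^{2})$; replacing one bare factor $\text{SINR}_{u_m,u_j}$ by $\text{SINR}_{u_m,u_j}/(1+\text{SINR}_{u_m,u_j})$ — the tight local surrogate that accounts for the \emph{near-optimal} qualifier — turns this into exactly the summand in the denominator of \eqref{Power_update}. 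Setting $\partial U/\partial Q_k=0$, isolating $Q_k$, and freezing the SINR values and cross-powers at their previous iterates $\widehat{Q}_m$ converts the stationarity equation into the explicit best-response map \eqref{Power_update}; the outer clipping $[\cdot]_0^{Q_{\max}}$ is the KKT-consistent projection that activates whenever the unconstrained root leaves $[0,Q_{\max}]$.

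Finally I would prove convergence of the iteration. The cleanest route is to show that the map \eqref{Power_update} is a \emph{standard interference function} in the sense of Yates — positive, monotone, and scalable in $\widehat{\mathbf{Q}}$ — so that the projected iterates converge to a unique fixed point; an alternative is a minorize-maximize argument in which each update maximizes a concave surrogate of $U$ that lower-bounds it and is tight at $\widehat{\mathbf{Q}}$, making $U$ monotonically non-decreasing along a bounded sequence and hence convergent to a KKT point. I expect this convergence-and-optimality step, rather than the differentiation, to be the main obstacle: since $U$ is concave in its useful terms but convex in its interference terms, \eqref{Power_opt} is non-convex, so only a stationary \emph{near-optimal} point can be certified, and verifying the scalability/monotonicity hypotheses — together with the tightness of the surrogate that legitimizes the $\big(\text{SINR}/(1+\text{SINR})\big)^{2}$ replacement — is where the argument must be made with care.
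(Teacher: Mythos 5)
Your derivation of the update equation follows essentially the same route as the paper's proof: differentiate the sum-capacity with respect to $Q_k$, split the gradient into the own-link part (which yields $\tfrac{1}{Q_k}\sum_{u_i\in\mathtt u(\mathtt f_{u_k})}\tfrac{\text{SINR}_{u_k,u_i}}{1+\text{SINR}_{u_k,u_i}}$) and the interference part, set it to zero, isolate $Q_k$ while freezing the remaining SINRs and powers at the previous iterate $\widehat{Q}_m$, and project onto $[0,Q_{\max}]$ as the KKT-consistent clipping; this is precisely \eqref{power_opt_2}--\eqref{power_opt_3} in the paper, which it attributes to Lemma~2 of \cite{Ahmed_Multi_level_EC}. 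Two things distinguish your write-up. First, you are more careful with the interference term: the exact chain-rule coefficient is $\tfrac{\text{SINR}_{u_m,u_j}^{2}}{1+\text{SINR}_{u_m,u_j}}\cdot\tfrac{|\gamma_{u_k,u_j}|^{2}}{Q_m|\gamma_{u_m,u_j}|^{2}}$, whereas the paper writes $\bigl(\tfrac{\text{SINR}_{u_m,u_j}}{1+\text{SINR}_{u_m,u_j}}\bigr)^{2}$ (and carries an index $u_i$ where $u_j$ is meant) and presents this directly as the derivative of $\mathcal N_k$ with no comment --- note also that the interference terms only arise when differentiating the full objective $\sum_m\mathcal N_m$, not $\mathcal N_k$ alone, so your bookkeeping is the correct one. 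Your reading of the squared-ratio coefficient as a deliberate surrogate is a charitable reconstruction of a step the paper does not justify; if you adopt it, you must state explicitly that the resulting fixed point is stationary for the surrogate rather than for \eqref{Power_opt} itself, since the two coefficients diverge at high SINR. Second, you propose to establish convergence via Yates' standard-interference-function properties or a minorize--maximize argument; the paper proves nothing about convergence inside this proposition and defers it entirely to its Proposition~3, where convergence is argued by casting the update as the best response of a non-cooperative power-control game with quasi-concave utilities admitting a Nash equilibrium. Your route would make the proposition self-contained, but verifying monotonicity and scalability of the right-hand side of \eqref{Power_update} is nontrivial given the squared SINR ratios in the denominator, and you rightly flag this as the hard part; the paper's game-theoretic detour buys a convergence claim with less verification at the cost of separating it from the statement being proved.
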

where $\text{SINR}_{u_m, u_j}$, $\forall m,j$, is obtained by applying the value $\widehat{Q}_m$ in the expression of end-to-end SINR.

\begin{proof} The proof follows similar steps of \cite[Lemma 2]{Ahmed_Multi_level_EC}. Particularly, although \eqref{Power_opt} is a non-convex power allocation problem, a local optimal solution to \eqref{Power_opt} can be obtained by obtaining the stationary point of the objective function. To obtain a stationary power allocation for the $u_k$-th transmitting UD, we need to solve $\frac{\partial \mathcal{N}_k}{\partial Q_k}=0$. In particular, we obtain
\begin{align}
\label{power_opt_2}
&\frac{\partial \mathcal{N}_k}{\partial Q_k}=\frac{1}{Q_k} \sum_{ u_i \in \mathtt u(\mathtt f_{u_k})}\frac{\text{SINR}_{u_k,u_i}}{1+\text{SINR}_{u_k,u_i}}\\ \nonumber
&-\sum_{m=1, m\neq k} \sum_{ u_j \in \mathtt u(\mathtt f_{u_m})} \left(\frac{\text{SINR}_{u_m,u_j}}{1+\text{SINR}_{u_m,u_j}}\right)^2 \frac{\gamma_{u_k,u_i}}{\widehat{Q}_m\gamma_{u_m,u_j}}.
\end{align} 
Therefore, by solving $\frac{\partial \mathcal{N}_k}{\partial Q_k}=0$, we obtain
\begin{equation}
\label{power_opt_3}
Q_k=\left[\frac{\sum_{ u_i \in \mathtt u(\mathtt f_{u_k})}\frac{\text{SINR}_{u_k,u_i}}{1+\text{SINR}_{u_k,u_i}}}{\sum_{m=1, m\neq k} \sum_{ u_j \in \mathtt u(\mathtt f_{u_m})}\left(\frac{\text{SINR}_{u_m,u_j}}{1+\text{SINR}_{u_m,u_j}}\right)^2\frac{\gamma_{u_k,u_i}}{{Q}_m\gamma_{u_m,u_j}}}\right]
\end{equation}
By solving \eqref{power_opt_3}, one can obtain the stationary point for the objective function of the $u_k$-th transmitting UD, $\forall u_k$. However, a closed-form power allocation by solving \eqref{power_opt_3} is intractable. Accordingly, we adopt an iterative approach to obtain a near-optimal stationary power allocation. To this end, we denote $\widehat{Q}_k$ as the given power allocation for the $u_k$-th transmitting UD, $\forall u_k$, and evaluate the R.H.S. of \eqref{power_opt_3} for the given power allocations. Finally, by projecting R.H.S of  \eqref{power_opt_3} to the feasible region of the power allocations, we obtain \eqref{Power_update}. 
\end{proof}

Based on \textit{Proposition \ref{Power_Prop}}, an iterative algorithm to obtain transmit power allocations for a given set of transmitting UDs is provided as Algorithm 1. The convergence of Algorithm 1 is justified as follows.

\begin{algorithm}[t!]
	\caption{Transmit Power Allocations for A Given Set of Transmitting UDs} \label{alg:SLPA}
	\begin{algorithmic}[1]
		\STATE \textbf{Input:} Set of transmitting UDs, the file combinations, and the associated UDs with each transmitting UDs.
		
		\STATE \textbf{Initialize:} $\widehat{Q}_{u_k}=Q_o$, $\forall k=1,2,\cdots A $, $t=1$.
		
		\REPEAT
		
		\STATE Update the power allocation of the $u_k$-th transmitting UD, $\forall u_k $, by applying \eqref{Power_update}.
		
		\STATE Set $\widehat{Q}_{u_k}=Q_{u_k}$,  $\forall k=1,2,\cdots A$, and $t=t+1$
		
		\UNTIL{Objective function of \eqref{Power_opt} converges or $t > t_{\max}$.}
		
		\STATE \textbf{Output:} Final transmission power for all the transmitting UDs. 
		
	\end{algorithmic}
	
\end{algorithm}

\begin{proposition}
	Algorithm 1 provides  a stable and local optimal solution to $\eqref{Power_opt}$.
\end{proposition}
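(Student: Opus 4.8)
The plan is to prove the two asserted properties in turn: \emph{stability} (the iterates, and in particular the objective values, converge) and \emph{local optimality} (the limit is a stationary/KKT point of \eqref{Power_opt}). Throughout, write $U(\mathbf Q)=\sum_{k=1}^{A}\mathcal N_k$ for the sum-capacity objective and let $\mathbf Q^{(t)}$ denote the power vector after the $t$-th pass of the \textbf{repeat} loop. Two structural facts drive the argument: the feasible region $[0,Q_{\max}]^{A}$ is compact, and each $\mathcal N_k$ is a finite sum of terms $\log_2(1+\text{SINR}_{u_k,u_i})$ whose SINR arguments are finite once every power is capped by $Q_{\max}$; hence $U$ is continuous and bounded above by some $U_{\max}<\infty$ on the feasible box.

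For stability I would show that $\{U(\mathbf Q^{(t)})\}_t$ is monotonically non-decreasing along the iterations generated by \eqref{Power_update}. Granting this ascent property, the sequence $\{U(\mathbf Q^{(t)})\}$ is non-decreasing and bounded above by $U_{\max}$, so by the monotone convergence theorem it converges; this is exactly the termination criterion of Algorithm~1 and establishes that the algorithm is stable.

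For local optimality I would characterise the fixed points of the projected map \eqref{Power_update}. If $\mathbf Q^{\star}$ is such a fixed point, then for each $k$ the clipping $[\cdot]_0^{Q_{\max}}$ leaves the $k$-th coordinate invariant, so exactly one of three cases holds: the interior stationarity equation \eqref{power_opt_3} is satisfied (equivalently, the right-hand side of \eqref{power_opt_2} vanishes, noting that it already aggregates the own-rate gain and the interference cost incurred at the receivers of all transmitters $m\neq k$), or the unclipped value is negative and $Q_k^{\star}=0$, or it exceeds $Q_{\max}$ and $Q_k^{\star}=Q_{\max}$. These three cases are precisely the KKT conditions of the box-constrained problem \eqref{Power_opt}: the interior case gives a vanishing gradient, and the two boundary cases supply the sign and complementary-slackness conditions on the multipliers of $0\le Q_k\le Q_{\max}$. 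Because the box constraints are linear, a constraint qualification holds automatically, so every fixed point of \eqref{Power_update} is a KKT (stationary) point of \eqref{Power_opt}, which, together with the convergence above, is the claimed local optimum.

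The main obstacle is establishing the monotone-ascent step, since \eqref{Power_update} is a simultaneous Jacobi-type update that freezes the interference terms at the previous iterate $\widehat{\mathbf Q}$ while $U$ is non-convex and coupled across the $Q_k$; thus monotonicity does not follow from a one-line block-coordinate-ascent argument. I would resolve this by recasting the update as a majorization--minimization scheme: construct a surrogate $\widetilde U(\mathbf Q\,;\,\widehat{\mathbf Q})$ that minorizes $U$, agrees with it at $\widehat{\mathbf Q}$, and whose stationarity condition reproduces \eqref{power_opt_3}; then the chain $U(\mathbf Q^{(t+1)})\ge \widetilde U(\mathbf Q^{(t+1)};\mathbf Q^{(t)})\ge \widetilde U(\mathbf Q^{(t)};\mathbf Q^{(t)})=U(\mathbf Q^{(t)})$ delivers the required ascent. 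Verifying that the surrogate induced by \eqref{Power_update} is a valid global minorizer over the whole box is the delicate part; an alternative is to invoke Yates' standard-interference-function framework to show the map is monotone and scalable, which yields convergence to a unique fixed point directly. Either route is exactly the technical content borrowed from \cite[Lemma 2]{Ahmed_Multi_level_EC}.
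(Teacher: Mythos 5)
Your argument follows a genuinely different route from the paper's. The paper proves this proposition by recasting the iteration as a non-cooperative power control game: it argues that each transmitter's utility $\mathcal N_k(Q_k,\mathbf{Q_{-k}})$ is quasi-concave in $Q_k$ (increasing below the fixed point of \eqref{Power_update}, decreasing above it), invokes the existence of a Nash equilibrium for quasi-concave games \cite[Theorem 3.2]{Game_Th}, identifies \eqref{Power_update} with the best-response map, and concludes convergence to a stable point; local optimality then follows because \eqref{Power_update} was derived from the KKT conditions of \eqref{Power_opt}. You instead propose an optimization-theoretic argument: monotone ascent of the sum objective $U$ on the compact box to get convergence of the objective values, plus a fixed-point analysis of the projected update showing that any limit satisfies the KKT conditions. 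Your fixed-point/KKT half is sound and in fact sharper than the corresponding step in the paper --- the three-case analysis of the clipping operator is exactly the right way to recover the complementary-slackness conditions, and the linearity of the box constraints does dispose of the constraint qualification.

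The gap is in the stability half. You state the ascent property conditionally ("granting this ascent property") and then offer two candidate mechanisms --- an MM surrogate or Yates' standard-interference-function framework --- without carrying either through, and neither applies off the shelf. The update \eqref{Power_update} is a Jacobi-type evaluation of a stationarity equation, not an SINR-target-tracking iteration, so it is not obviously monotone and scalable in Yates' sense; and a global minorizer of the non-convex $U$ whose stationary point reproduces \eqref{power_opt_3} is precisely the object you would need to exhibit, which you do not. So as written, the convergence of Algorithm 1 is not actually established by your argument. (To be fair, the paper's own convergence step is also informal --- the existence of a Nash equilibrium for a quasi-concave game does not by itself imply that simultaneous best-response dynamics converge --- but the paper at least commits to a concrete mechanism, whereas your proposal leaves the choice of mechanism, and its verification, open.) If you want to complete your route, the MM construction is the one to pursue: linearize the $\log_2$ of the interference-plus-noise term at $\widehat{\mathbf Q}$ to obtain a concave lower bound that is tight at $\widehat{\mathbf Q}$, and check that its per-coordinate stationarity condition is \eqref{power_opt_3}.
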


\begin{proof} We can proof \textit{Proposition 2} by resorting to the game theory. In fact, the proposed power allocation update can be considered as a non-cooperative power control game (NCPCG) where each transmitting UDs act as a rational and selfish player, and wants to maximize its utility by choosing the best possible power allocation strategy. To this end, the utility function of the $u_k$-th transmitting UD is given at the top of the next page, where $\mathbf{Q_{-k}}$ denotes the power allocation for the transmitting UDs other than the $u_k$-th UD. 
	
\begin{table*}
	\vspace*{-0.4cm}
	\begin{normalsize}	
\begin{equation*}
\label{Utility-K}
\begin{split}
\mathcal N_k(Q_k,\mathbf{Q_{-k}})=&\sum_{ u_i \in \mathtt u(\mathtt f_{u_k})} \log_2\left(1+ \cfrac{Q_{k} |\gamma_{u_k,u_i}|^{2}}{N_{0} +\sum_{m=1, m \neq k}^K Q_{m}| \gamma_{u_m,u_i}|^{2}}\right) \\
&+ \sum_{m=1, m\neq K}^K \sum_{ u_j \in \mathtt u(\mathtt f_{u_m})} \log_2\left(1+ \cfrac{Q_{m} |\gamma_{u_m,u_j}|^{2}}{N_{0} +\sum_{n=1, n \neq k,m}^K Q_{n}| \gamma_{u_n,u_i}|^{2}+Q_{k} |\gamma_{u_k,u_i}|^2}\right)
\end{split}
\end{equation*} 
\end{normalsize}
\vspace*{-0.5cm}
\hrulefill
\end{table*}

The utility function has two parts where the first part is the payoff in terms of the achievable throughput and the second term is the payoff for creating less interference to the other players in the system. Obviously, the first and second terms monotonically increase and decrease with the increase of transmission power, $Q_k$, respectively.  We denote the R.H.S of \eqref{Power_update} as $\mathcal{F}_k\left(\{\widehat{Q}_k\}\right)$. We can readily demonstrate that if $Q_k < \mathcal{F}_k\left(\{\widehat{Q}_k\}\right)$, $U_k(Q_k,\mathbf{Q_{-k}})$ monotonically increases, and if $Q_k> \mathcal{F}_k\left(\{\widehat{Q}_k\}\right)$, $U_k(Q_k,\mathbf{Q_{-k}})$ monotonically decreases. Therefore, $U_k(Q_k,\mathbf{Q_{-k}})$ is a quasi-concave utility function. From \cite[Theorem 3.2]{Game_Th}, for a non-cooperative game with quasi-concave utility functions, a Nash-equilibrium (NE) point must exists and it is obtained as the best response strategy of the players in the game. Note that, in an NE point, no player can improve its utility by taking an alternative strategy, and consequently, the overall solution must converge. We can easily justify that \eqref{Power_update} is same as the  best response strategy of the $k$-th transmitting UD, $\forall k$. Consequently, the iterative power allocation procedure, given in Algorithm 1, must converge to a stable point. We also emphasize that \eqref{Power_update} is derived by satisfying the Karush-khun-Tucker (KKT) conditions for \eqref{Power_opt}. Hence, a stable power allocation that is obtained by iteratively solving \eqref{Power_update} must converge to a local  optimal solution to \eqref{Power_opt}. Accordingly,  Algorithm 1 provides a stable and local optimal solution to \eqref{Power_opt}\footnote{The convergence of transmission power update equation, given by \eqref{Power_update}, is justified for asymptotically high signal-to-noise (SNR) ratio regime in \cite{Ahmed_Multi_level_EC}. However, using \textit{Proposition} 2, we justify that the considered power allocation converges without the assumption of asymptotic high SNR.}. 
\end{proof}

\subsection{Greedy Maximal Independent Set Selection Algorithm} \label{sec:gvCT}
In this sub-section, we describe a  maximal IS  selection  algorithm based on a greedy vertex search in the D2D-RA-IDNC graph and the priority of vertices defined in \pref{thm:Mweight1}.  Such a  greedy vertex search  approach was adopted   in \cite{9new,9newnew} without adopting the physical-layer rate, but  demonstrated its efficiency for completion time minimization. For simplicity, we use $v$ and $v'$ instead of $v^k_{r,i,h}$ and $v^k_{r,i',h'}$, respectively. Let $\mathcal E_{v,v'}$ be the adjacency connector of vertices $v$ and $v'$ in graph  $\mathcal{G}$ such that
\begin{equation}
\mathcal E_{v, v'} =
\begin{cases}
1 & \text{if $v$ is not adjacent to $v'$ in $\mathcal{G}$}, \\
0 & \text{otherwise}.
\end{cases}
\end{equation}

Further, let $g_v$ denote the weighted degree of vertex $v$, which can be expressed   as $g_v = \sum_{v'  \in \mathcal{G}} \mathcal E_{v, v'} \omega_o(v_v')$, where $\omega_o(v')$ is the priority  of vertex $v'$  defined in \eqref{eqw}.   Finally,  the modified weight  of vertex $v$ is defined as
\begin{align}\label{w:weight}
\omega_m (v) & = \omega_o(v) g_v  = 2^{N_w-d_{u_i}+1}\bar{\mathtt T}_{u_i}(t)\left(\frac{r}{B}\right) n_v.
\end{align}

To this end, at each step, the vertex search method adds a new vertex based on the maximum weight. Essentially, a vertex $v^*$ that has the maximum weight $\omega_m(v^*)$ is selected  and added to the maximal independent set $\mathtt I$ (i.e., $\mathtt I = \{v^*\}$). Then,  the  subgraph $\mathcal G(\mathtt I)$, which consists of vertices in graph  $\mathcal G$ that are not connected to vertex $v^*$ is extracted and considered for the next step. Next, a  new maximum weight  vertex $v^{'*}$ is selected from subgraph $\mathcal G(\mathtt I)$. We repeat this process until no more vertices that are not connected  to all the vertices in the maximal independent set  $\mathtt I$. The steps of the greedy vertex search selection are summarized in  \algref{alg:LGS}.

The  transmitting UD in $\mathtt I$  generates a coded  file by XORing all the files identified by the vertices in  $\mathtt I$.  It also adopts  the  transmission rate corresponding to  the vertices of  $\mathtt I$. It is worth mentioning that the MWIS  $\mathtt I$ and its corresponding modified weights in \eref{w:weight} provide the following potential benefits:
\begin{itemize}
	\item The modified weight of each vertex in  $\mathtt I$ shows the following. The first term $\left(\frac{r}{B}\right)$  provides a balance between  the transmission rate/power and the number of scheduled  UDs to transmitting UD $u_k$. The second term $2^{N_w-d_{u_i}+1}\bar{\mathtt T}_{u_i}(t)$ classifies the UDs based on their  completion time lower bounds. As such, we give them priority for scheduling. More importantly, through the weighted degree $g$, the modified weight of a vertex $v$ has a large original weight and it is not connected to a large number of vertices that have high original	weights. 
	\item Each UD is scheduled only to a transmitting UD that cached one of its missed files.
	\item The transmitting UD delivers an IDNC file with an adopted
	transmission rate/power that provides a lower completion time to a set of UDs. This adopted rate ensures the QoS rate guarantee and no larger than the channel capacities of all scheduled UDs.
\end{itemize} 

\subsection{Cross-layer NC Solution}
The iterative proposed CLNC solution maximizes the weighted sum rate subject to completion time reduction constraints, i.e., the problem of determining transmitting UDs and their transmission rates/powers and transmitted file combinations in a coordinated fashion. Particularly, we iterate between solving the
completion time reduction  problem for fixed transmit power and optimizing the
power level for a given schedule of completion time reduction. The main philosophy of this heuristic is to iteratively include more transmitting UDs and allocate transmission powers subject to  the reduction in the  completion time. At each iteration, it first determines the scheduled UDs by the set of chosen transmitting UDs as described in  Algorithm \ref{alg:LGS}. Then, given the resulting network-coded user scheduling, it executes a power allocation algorithm to determine  the power level of the transmitting UDs that maximizes the sum-rate and minimizes the completion time as described in Algorithm \ref{alg:SLPA}. The steps of the proposed iterative solution is described as in Algorithm \ref{alg:CLNC}. 

\begin{algorithm}[t!]
	\begin{algorithmic}[1]
		\STATE Generate D2D-RA-IDNC graph $\mathcal G$.
		\STATE Initialize $\mathtt I= \varnothing$.
		\STATE Set $\mathcal G(\mathtt I) \leftarrow \mathcal G$.
		\WHILE{$\mathcal G(\mathtt I) \neq \varnothing$}
		\STATE $\forall  v \in \mathcal G(\mathtt I)$: compute $\omega_o(v)$ and $\omega_m(v)$ using \eref{eqw} and \eref{w:weight}, respectively.
		\STATE Select $v^*=\arg\max_{v\in \mathcal G(\mathtt I)} \{\omega_m (v^*)\}$.
		\STATE Set $\mathtt I \leftarrow \mathtt I \cup v^*$.
		\STATE Obtain $\mathcal G(\mathtt I)$.
		\ENDWHILE
	\end{algorithmic}
	\caption{Greedy Maximal Weight Independent Set (MWIS) Selection} \label{alg:LGS}
\end{algorithm}

In Algorithm \ref{alg:CLNC}, $\mathcal{A}$ is the set of selected transmitting UDs, $\mathcal{M}_w$ is the set of UDs having non-empty \textit{Wants} set, and $\mathcal{X}$ is the set of all the targeted UDs.

\begin{algorithm}[t!]
\caption{Cross-layer Network-Coded (CLNC) Resource Scheduling Algorithm} \label{alg:CLNC}
	\begin{algorithmic}[1]
		\STATE \textbf{Initialize:} $\mathcal A$ = $\varnothing$, $\mathcal{M}_w=\mathcal{N}_{w}$, and $\mathcal{X}=\varnothing$.
		
		\STATE \textbf{Initialize:} Transmission power level $Q_o \in Q_\text{feasible}$ for each potential transmitting UD.
		
			\STATE Compute SINR(s) setting transmission power $Q_o$ and considering no interference.

		\REPEAT

		\STATE Construct the D2D-RA-IDNC graph using \sref{Graph} by considering using $\mathcal M_w$ as the set of potential transmitting or targeted 		UDs. 
		
		\STATE Select maximal independent set $\mathtt I$ using \algref{alg:LGS}.  Let the transmitting UD be $u_k$, the file combination  be $\mathtt f$ in the maximal independent set $\mathtt I$, and  the set of potential targeted UDs by $u_k$ be $\mathtt u(\mathtt f_{u_k})$. 
		
		\STATE Compute the lower bound on the individual completion time  of targeted UDs in $\mathcal X$ set using \eqref{eq:Aig} and increase delay of the non-targeted UDs in $\mathcal M_w \backslash \mathcal X$ by $\frac{B}{r(t)}$.
		
		\STATE \ignore{Compute the lower bound on the overall completion time $\bar{\mathtt T}$ and} Set $\mathcal A = \{\mathcal A, u_k\}$.
		
		\STATE Consider the UDs in $\mathcal M_w \backslash ( \{u_k\} \cup \mathcal X(\mathtt f_{u_k}))$ as future transmitting UDs or more targeted UDs.
		
		\STATE Compute the SINRs by setting the transmission power as $Q_o$ and considering interference from the UDs in the set $\mathcal A$.
		
		\ignore{\STATE Using the computed SINRs in step 10, construct a new reduced graph using the UDs in $\mathcal M_w \backslash ( \{u_k\} \cup \mathcal X(\mathtt f_{u_k}))$.
		
		\STATE Follow steps $6-8$ to determine a new transmitting UD $u_m$ and the targeted UDs in $\mathcal X(u_m)$.}
		
		\STATE Optimize the transmission power of the transmitting UDs $\mathcal A$ using Algorithm  \ref{alg:SLPA} to maximize the sum-capacity.
		
		\STATE If the sum-capacity is improved, $\mathcal{A} \leftarrow \mathcal{A} \cup u_i$ and $\mathcal{M}_w \leftarrow \mathcal{M}_w \setminus u_i$.

		\IF{$|\mathcal{A}| > 1$}

	\STATE For each receiving UD, $u_i$, compute $R_{u_i}$. If $R_{u_i} \geq R_{th}$ and $u_i \notin \mathcal{X}$,  update $\mathcal{X} \leftarrow \mathcal{X} \cup u_i$ and $\mathcal{M}_w \leftarrow \mathcal{M}_w \setminus u_i$. On the other hand, if  $R_{u_i} < R_{th}$ and $u_i \in \mathcal{X}$, update $\mathcal{X} \leftarrow \mathcal{X} \setminus u_i$ and $\mathcal{M}_w \leftarrow \mathcal{M}_w \cup u_i$. Repeat this step $\forall u_i \in \mathtt u(\mathtt f_{u_k})$ and $\forall u_k \in \mathcal A$.
	    
	    	\ENDIF	
	
		\STATE $\exists u_k \in \mathcal{A}$ such that the none of UDs in $\mathtt u(\mathtt f_{u_k})$ set satisfies the rate constraint,   $\mathcal{M}_w \leftarrow \mathcal{M}_w \cup u_k$ and $\mathcal{A} \leftarrow \mathcal{A} \setminus u_k$.
		
		\STATE Recompute $\bar{\mathtt T}$ and store the solution that achieves
		the minimum completion time.

		\UNTIL{No UDs can be added to the set $\mathcal{A}$.}
		\STATE \textbf{Output:} Overall completion time $\bar{\mathtt T}$.
		\end{algorithmic}
\end{algorithm}

\begin{proposition}
The CLNC solution achieves improved sum-rate compared to the interference free solution of \cite{15}.
\end{proposition}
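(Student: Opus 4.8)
The plan is to prove the sum-rate inequality $R_{\mathrm{CLNC}}\ge R_0$, where $R_0$ denotes the sum-rate delivered by the interference-free scheme of \cite{15}, by exhibiting that scheme as a feasible \emph{initial} configuration of the CLNC procedure and then invoking the monotone acceptance rule that governs the admission of transmitting UDs in \algref{alg:CLNC}. Throughout I would measure sum-rate by the objective of \eqref{Power_opt}, namely $\sum_{u_k\in\mathcal A}\sum_{u_i\in\mathtt u(\mathtt f_{u_k})}\log_2(1+\mathrm{SINR}_{u_k,u_i})$, which specializes, when a single UD is active, to the interference-free throughput realized in \cite{15}.

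First I would settle the base case. When $|\mathcal A|=1$ there is no cross-link interference, so each $\mathrm{SINR}_{u_k,u_i}$ in \eqref{Power_opt} reduces to the corresponding SNR, and the interference term in the denominator of the power update \eqref{Power_update} vanishes; consequently \algref{alg:SLPA} drives the lone transmitter to $Q_{\max}$. Hence the first pass of \algref{alg:CLNC} reproduces a single-transmitter, full-power, interference-free operating point of the same type used in \cite{15}, and since its transmitter, file combination, and rate are chosen by the greedy MWIS rule of \pref{thm:Mweight1} over the single-transmitter sub-graph, the sum-rate $R^{(1)}$ after the first pass obeys $R^{(1)}\ge R_0$.

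Next I would establish monotonicity along the outer loop. A newly selected transmitting UD is folded into $\mathcal A$ only through the acceptance test of \algref{alg:CLNC}, which retains the candidate if and only if the sum-capacity improves after the joint power re-optimization of \algref{alg:SLPA}; candidates that fail are discarded, and the store-best step keeps the best configuration encountered so far. Together with \pref{Power_Prop}, which guarantees that \eqref{Power_update} converges to a stationary allocation whose objective of \eqref{Power_opt} is no smaller than that of its feasible initialization, this makes the sequence of stored sum-rates non-decreasing. Telescoping from the base case then yields $R_{\mathrm{CLNC}}=R^{(\mathrm{final})}\ge R^{(1)}\ge R_0$.

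The delicate point, on which I would spend the most care, is the base-case comparison $R^{(1)}\ge R_0$: because \cite{15} optimizes completion time rather than sum-rate, I must argue that restricting CLNC to one transmitter cannot yield a smaller sum-rate than the single-transmitter schedule of \cite{15}. This follows from the two schemes sharing the identical single-transmitter feasibility region together with CLNC's weight-maximizing first selection, but the step deserves an explicit statement. A secondary subtlety is to verify that the transmitter-drop branch and the $R_{\mathrm{th}}$ re-classification of targeted UDs cannot break the global non-decrease of the stored sum-rate; here I would rely on the store-best step, so that discards never count against the monotone bound.
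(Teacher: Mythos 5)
Your proposal follows essentially the same route as the paper: identify the first iterate $\mathcal{A}^{(1)}$ (a single transmitter) with the interference-free scheme of \cite{15}, then use the improve-or-discard acceptance rule together with the power re-optimization of \algref{alg:SLPA} to argue the stored sum-rate is non-decreasing across iterations, so the final configuration dominates the single-transmitter baseline. The paper simply asserts $R^{(1)}=R_0$ by construction rather than arguing $R^{(1)}\ge R_0$ from the shared feasibility region, so your flagged ``delicate point'' is handled there by fiat; otherwise the two arguments coincide.
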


\begin{proof} At each iteration of the proposed Algorithm 3, the set of transmitting UDs is updated. Let denote $\mathcal{A}^{(i)}$ be the set of transmitting UDs at the $i$-th iteration of Algorithm 3. Recall, only a finite number of UDs can be the transmitting UDs in a given TS. Thus, the set of  transmitting UDs is evolved  as
\begin{equation}
\label{evolution}
\mathcal{A}^{(1)} \rightarrow \mathcal{A}^{(2)}\rightarrow \cdots \rightarrow \mathcal{A}^{(\text{final})}. 
\end{equation}
Note that $\mathcal{A}^{(1)}$ contains only one transmitting UD. Particularly,  the proposed scheme initially selects a transmitting UD with maximum number of potential receiving UDs, and such a transmitting UD is included in $\mathcal{A}^{(1)}$. Subsequently, at each iteration Algorithm 3 adds one more transmitting UDs with the existing set of transmitting UDs given that the total sum-rate is improved. To this end, at each iteration, Algorithm 3 updates the power allocations of all the transmitting UDs to maximize the overall sum-capacity. Essentially, at each iteration of Algorithm 3, the sum-rate is non-decreasingly improved. Obviously, the sum-rate of the $\mathcal{A}^{(\text{final})}$ set must outperform the sum-rate achieved by the $\mathcal{A}^{(1)}$ set. Recall that the interference free solution of  \cite{15} selects only one transmitting UD each TS. Thus, the achievable sum-rate of the interference free solution of  \cite{15}  becomes same to the sum-rate achieved by the $\mathcal{A}^{(1)}$ set. Hence, the CLNC solution achieves improved sum-rate compared to the interference free solution of \cite{15}. 
\end{proof}
\textit{Remark 1:} \textit{By exploiting power allocation and time-varying channel of the UDs, the proposed CLNC solution activates multiple transmitting UDs at each TS. However, for severely strong inter-device interference channel, the power allocation may not improve the sum-capacity. In this case,  the proposed solution activates only one transmitting UD. Consequently, the the interference free solution of \cite{15} is a special case of the proposed CLNC solution. Hence, the proposed CLNC solution  always achieves a lower completion time  compared to  the  interference free solution.}

\textit{Remark 2:} \textit{When we schedule many UDs to the transmitting UDs, the number of targeted UDs is increased from the side information optimization, however, the sum-capacity may not be maximized. Consequently, we optimize the power/rate of the transmitting UDs to maximize the	sum-capacity. Thus, our CLNC solution not only increases the number of targeted UDs, but also maximizes the sum-capacity. }

\subsection{Complexity Analysis}
For any arbitrary D2D network setting and at any iteration of the proposed algorithm, we need to construct the D2D-RA-IDNC graph, calculate the power allocation of the transmitting UDs, and find the MWIS.

Since each UD caches only a set of files, the total number of vertices in D2D-RA-IDNC graph corresponding to that UD is $V=|\mathcal C|\times N$. Therefore, we construct the D2D-RA-IDNC graph for all UDs by generating $O(VN)$ vertices. Building the adjacency matrix needs a computational complexity of $O(V^2N^2)$. For the vertex search algorithm, we need first to calculate the weights of all vertices, and then finding the MWIS. It is easily to note that all UDs having vertices in the independent set have the same transmission rate as they initially corresponding to the same transmitting UD. Thus, the algorithm needs $|\mathcal{R}|$ maximal ISs. Note that
each maximal IS has at most $V$ vertices as
each UD can be targeted by at most one file (i.e., one
vertex for each targeted UD) per transmission. Each iteration with a given rate needs a complexity of $O(VN)$ for weight calculations of the MWIS. It also needs searching for at most $N-1$ vertices. Then, the complexity of the algorithm for finding the maximal ISs for all rates and their sum weights, at most, is $O(VN|\mathcal{R}|+(N-1)|\mathcal{R}|)=O(|\mathcal{R}|(VN+N-1))$. The computational complexity of constructing the D2D-RA-IDNC graph, building the adjacent matrix,  and  finding the MWIS is $O(|\mathcal{R}|(VN+N-1))+O(V^2N^2)=O(V^2N^2)$.

On the other hand, calculating the power allocation for any fixed D2D schedule needs $C_p=O(|\mathtt u_{u_1}|\times |\mathtt u_{u_2}| \times \cdots |\mathtt u_{u_K}|)$. Finally, Algorithm \ref{alg:CLNC} iterates between constructing the D2D-RA-IDNC graph and finding its corresponding MWIS and optimizing power levels of the transmitting UDs, thus leading to an overall computational complexity of $O(T(V^2N^2+C_p))$, where $T$ is the number of iterations.

\section{Numerical Results} \label{NC}
In this section, we present some numerical results that compare
the completion time performance of our proposed CLNC scheme with existing coded and uncoded schemes. We consider a D2D network where UDs are distributed  randomly within a  hexagonal cell of radius $500$m. We assume the channel gains between UDs follow the standard path-loss model, which consists of three components: 1) path-loss of $148 + 37.6 \log_{10}(d_{{u_k},{u_i}})$dB, where $d_{{u_k},{u_i}}$ represents the distance between $u_k$-th UD and $u_i$-th UD in km; 2) log-normal shadowing with $4$dB standard deviation and 3) Rayleigh channel fading with zero-mean and unit variance. We consider that the channels are perfectly estimated. The noise power and maximum’ UD
power are assumed to be $-174$ dBm/Hz and $Q_\text{max}=-42.60$ dBm/Hz, respectively, and the bandwidth is $1$ MHz. Unless otherwise stated, we initially consider that  each UD already has about $45\%$ and $55\%$ of $\mathcal F$ files for the considered schemes.
To evaluate the performance of our proposed scheme with different thresholds ($R_{\text{th}1}=0.5$, and $R_{\text{th}2}=5$), we simulate various scenarios with different
number of UDs, number of files, file sizes, and demand ratio of UDs. These thresholds represent the minimum transmission rates required for QoS. The performances of our joint solution for $R_{\text{th}1}=0.5$ and $R_{\text{th}2}=5$ are shown in solid and dash red lines, respectively.

For the sake of comparison, our proposed schemes are
compared with the following  existing schemes.
\begin{itemize}
	\item  \textbf{Uncoded Broadcast:} This scheme picks a random UD that broadcasts an uncoded file from its \textit{cache} set that is missing at the largest number of other UDs. Moreover, this scheme uses the minimum channel capacity from the transmitting UD to all other UDs as the transmission rate. 
	\item \textbf{Cooperative RLNC}: This RLNC algorithm  picks the UD with the highest side information rank  as the transmitting UD in a D2D  transmission \cite{19}. The picked   UD  encodes all  files  using random coefficient from a large Galois field. However, this algorithm discards  the dynamic transmission rates and for  the  transmission  to be successfully received by all other UDs, the minimum channel  capacity  from  the transmitting UD to all other UDs is adopted   as the  transmission rate.
	\item \textbf{Cooperative IDNC}: This  IDNC  algorithm considers cooperation among UDs and jointly  selects a set of transmitting UDs and their XOR file combinations \cite{9a}. However, this algorithm focuses on serving a large number of UDs with a new file in each time index to reduce  the overall completion time. Due to ignoring  the dynamic  rate adaptation, the minimum channel  capacity  from the transmitting UDs to all  targeted UDs is adopted as the transmission rate. 
\end{itemize}

For completeness of this work, we also compare our proposed scheme with the recent RA-IDNC work in \cite{15}. In this scheme, RA-IDNC scheme is employed for D2D network that allows only one transmitting UD to transmit at a time.

\begin{figure}[t]
	\centering
	\includegraphics[width=0.9\linewidth]{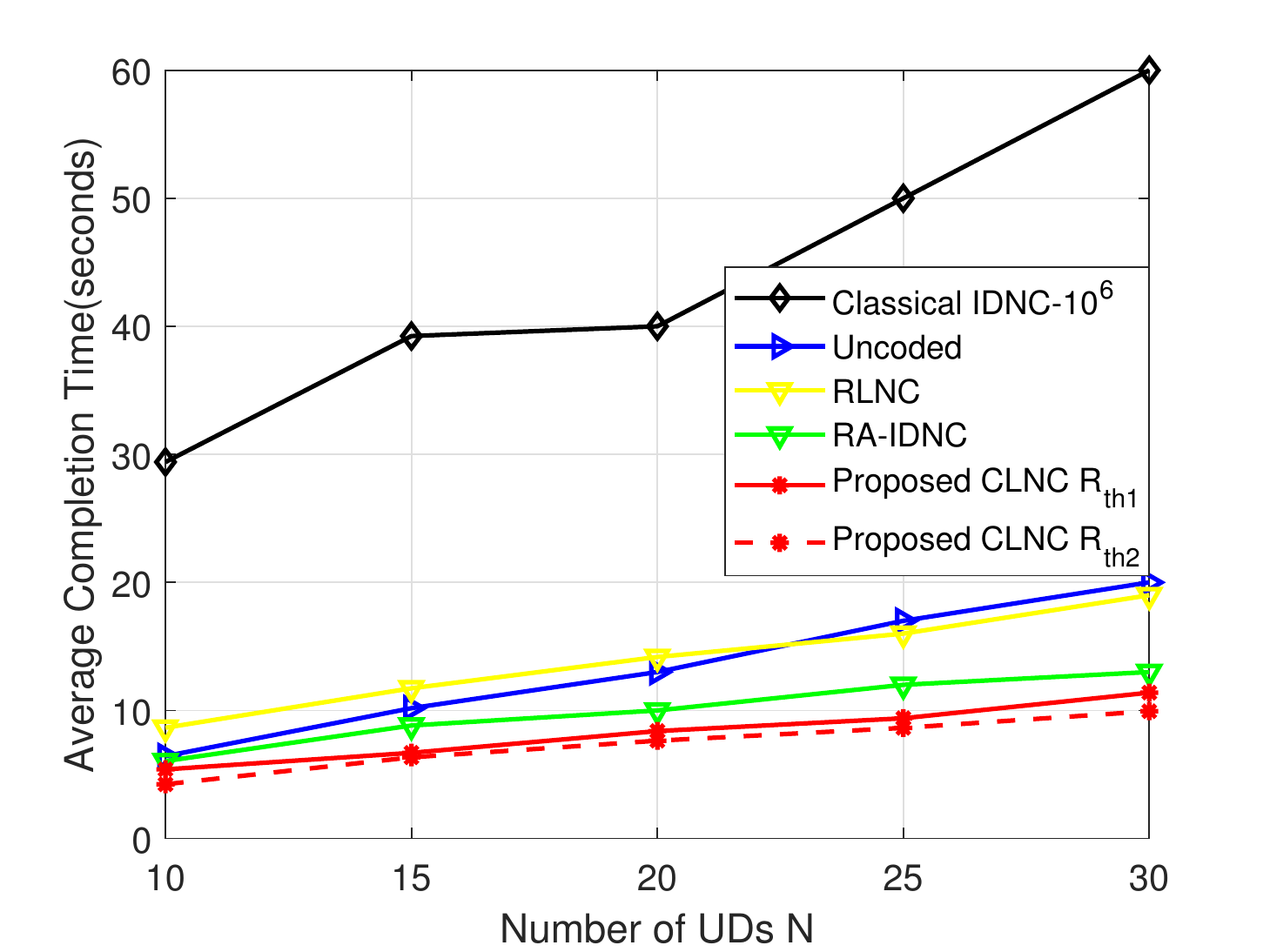}
	\caption{Average completion time in sec. vs the number of UDs $N$.}
	\label{fig2}
\end{figure}

\begin{figure}[t]
	\centering
	\includegraphics[width=0.9\linewidth]{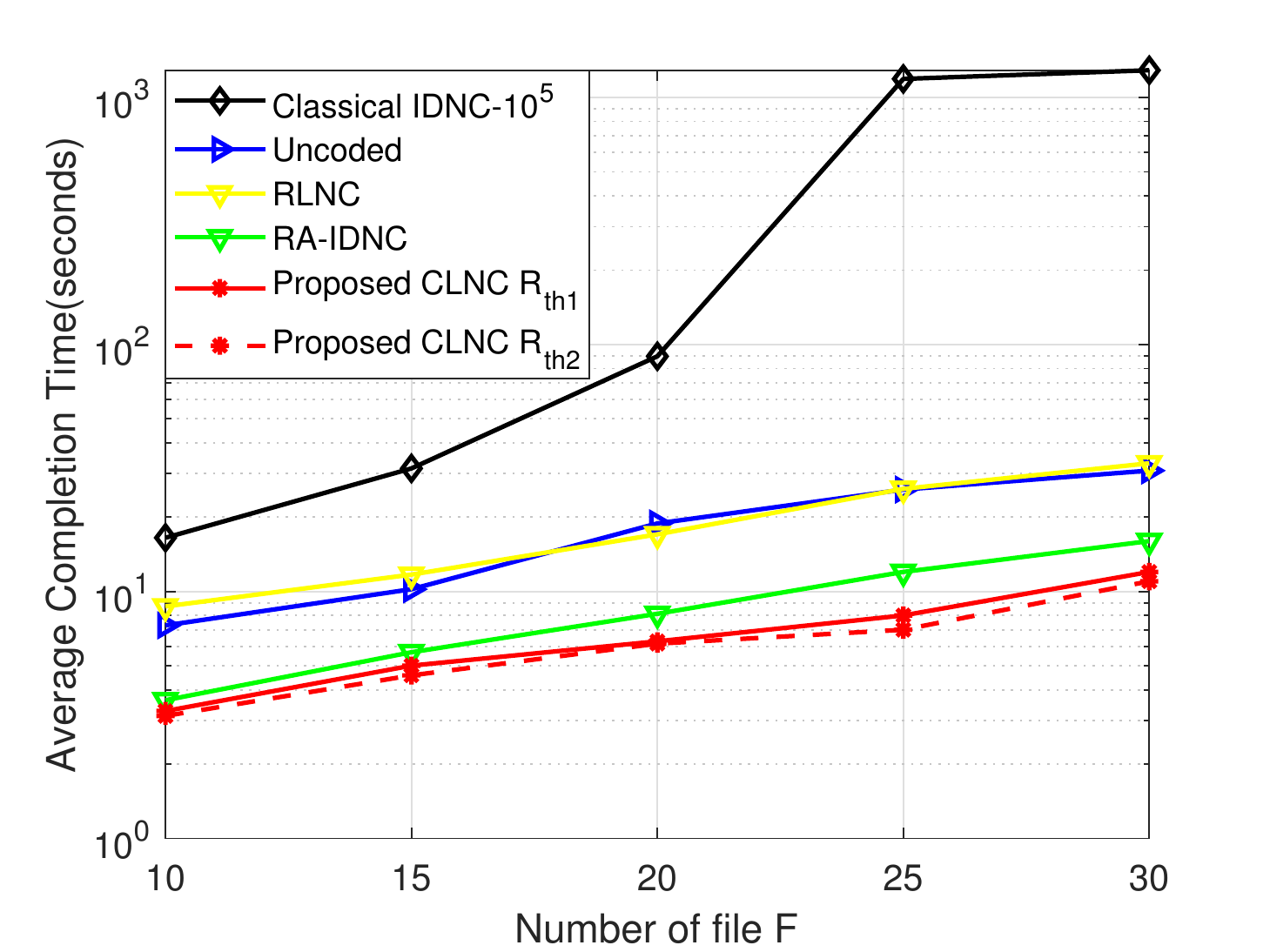}
	\caption{Average completion time in sec. vs the number of files $F$.}
	\label{fig3}
\end{figure}

In Fig. \ref{fig2},  we depict the  average completion time versus the number of UDs $N$. We consider a D2D model with a frame of $20$ files and a file size of $1$ Mbits. From this figure, we can observe 
that the proposed CLNC scheme offers an improved performance in terms of completion time minimization as compared to the other schemes for all considered number of UDs. This improved performance is due the fact that our
proposed scheme judiciously selects potential UDs for transmitting coded files to a set of schedule UDs, adopts the transmission rate, and optimizes the transmission power of each transmitting UD. This in turn aides the file combination selection process.
The uncoded broadcast scheme sacrifices the
rate optimality by scheduling the maximum number of UDs. Although uncoded scheme needs a fewer number of transmissions, at least $F$ transmissions, it requires longer transmission durations for frame delivery completion. This leads to a high completion time.  On the other hand, the RA-IDNC scheme improves the selection of file process by adapting the transmission rate, but it suffers from activating only one transmitting UD at each transmission slot. This is a clear limitation of the RA-IDNC scheme as it does
not fully exploit the simultaneous transmissions from multiple UDs. The proposed CLNC scheme strikes a balance between the aforementioned 
aspects by jointly selecting the number of targeted UDs and
the transmission rate of each transmitting UD such that the overall completion time is minimized. This results in a full utilization of simultaneous transmissions from multiple transmitting UDs. Consequently, an improved performance of our proposed scheme compared to the RA-IDNC scheme is achieved. Moreover, our proposed scheme improves the used rates using power control on each transmitting UD.

In Fig. \ref{fig3}, we show the average completion time versus the number of files $F$. Fig. \ref{fig3} considers different sizes of frames. The simulated D2D system composed of $20$ UDs and file size of $1$ Mbits. For the same reason as mentioned for Fig. \ref{fig2}, our proposed scheme outperforms other schemes. It can be observed from the figure that increasing the frame size leads to an increased completion time of all schemes. This is because for few  files, the opportunities of mixing files using IDNC in the proposed and other NC schemes are limited. As a result, all NC schemes have roughly similar performances. As the number of files increases, the  increase in the completion time with our proposed scheme is low.  This is due to the fact that our proposed scheme judiciously allows each transmitting UD to decide on a set of files to be XORed. As such, they are beneficial to a significant set of UDs that have relatively good channel qualities. Note that uncoded broadcast and RLNC schemes complete frame transmissions in fewer transmissions ($F$ transmissions) than our developed scheme. However, each of their transmission durations is longer than a single transmission of the proposed schemes since they are adopting the transmission rates to the minimum of all achievable capacities. 

\begin{figure}[t]
	\centering
	\includegraphics[width=0.9\linewidth]{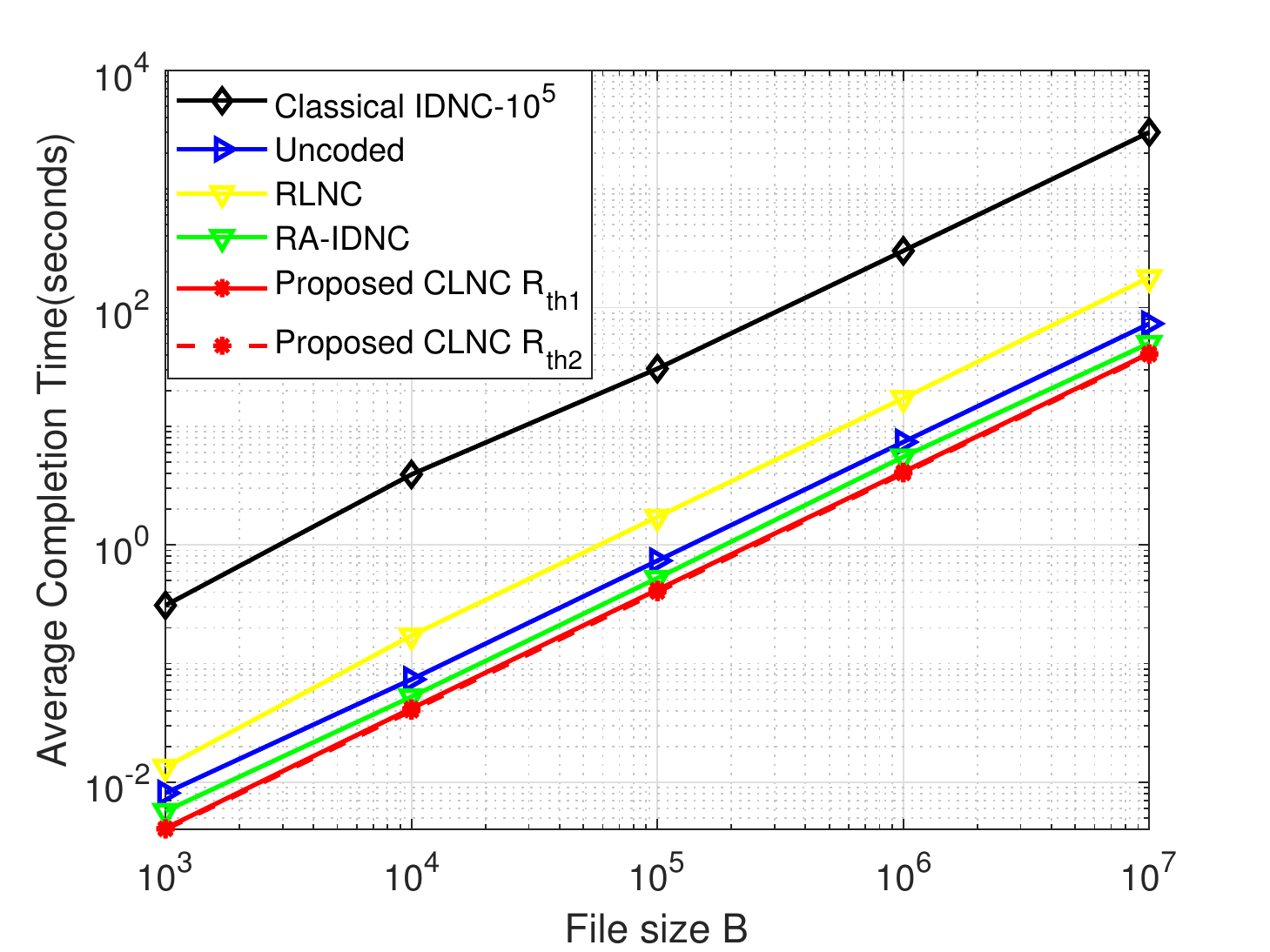}
	\caption{Average completion time in sec. vs file size $B$.}
	\label{fig4}
\end{figure}

\begin{figure}[t]
	\centering
	\includegraphics[width=0.9\linewidth]{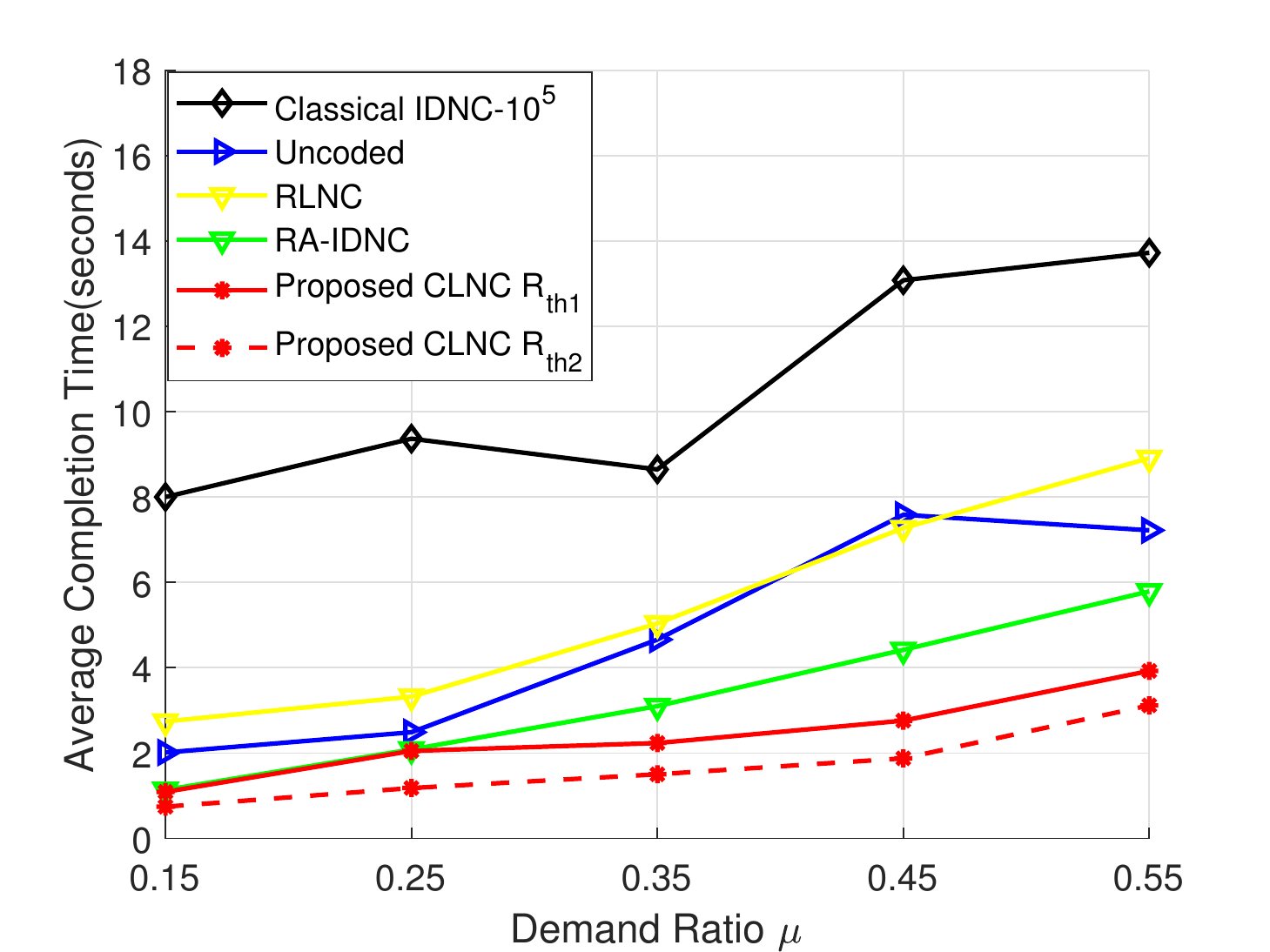}
	\caption{Average completion time in sec. vs demand ratio $\mu$.}
	\label{fig5}
\end{figure}

In Fig. \ref{fig4}, we illustrate the impact of increasing the file size $B$ on the average completion time. Fig. 8 shows the size of such popular files and  how long it takes for the proposed solution to deliver a complete frame to UDs. In this figure, we simulate the D2D system composed of $20$ UDs and $15$ files. We observe that the completion time performances of all schemes increase linearly with the file size. This agrees
with the completion time expression in \corref{cor:LBcompletion}, where it was shown that  $\mathtt T$ increases linearly with $B$. From physical-layer consideration, as $B$ increases, more bits are needed for delivering files. Thus, time delay is increased to receive files from transmitting UDs. It can be seen that the proposed scheme in all above figures outperforms all other schemes for different rate thresholds as shown with red lines. As the rate threshold increases, the completion time improvement increases. This is because as the rate threshold increases, a certain number of UDs is scheduled and the transmission rate of the transmitting UDs becomes high. Thus, the role of our proposed scheme for completion time minimization and QoS optimization technique becomes more noticeable.

In Fig. \ref{fig5}, we illustrate the impact of changing the demand ratio $\mu$ on the average completion time. This ratio represents the demand portion of the requested files of UDs. In this figure, we simulate the D2D system composed of $10$ UDs and $8$ files each with a size of $1$ Mbits. We can observe that the completion time performance of our proposed scheme outperforms the performances of other schemes for the whole range of $\mu$. It can be seen from the figure that increasing the demand ratio leads to an increased completion time of all schemes. This is because for high  demand ratio, the number of transmissions for delivering all the files to all the UDs of all schemes increases. As a result, the completion time performance of the considered schemes increases.

Finally, we provide some observations from our presented simulation results as follows. First, it is always beneficial from network-layer perspective to schedule many UDs with IDNC files as in the classical IDNC scheme. However, selecting the minimum transmission rate of all the scheduled UDs degrades the completion time performance of the classical scheme. Second, although the uncoded brodcast and RLNC schemes schedule almost all the UDs, they adopt the transmission rates to the minimum of all the scheduled UDs. Thus, its completion time performance is degraded, especially for large network sizes since selecting the minimum transmission rate of an increasing set is always minimum.   Third, RA-IDNC scheme overcomes the limitations of the aforementioned schemes but suffers from selecting only one transmitting UD. This limitation further degrades the completion time performance of the RA-IDNC scheme in large network sizes. This due to the fact that RA-IDNC scheme always selects one transmitter regardless of the size of the network. Conversely, our transmission framework is more practically  relevant as it considers different transmitting UDs and optimizes the employed rates using power control on each transmitting UD. 

\section{Conclusion} \label{C}
In this paper, we have studied the joint optimization of CLNC and D2D communications for the file delivery phase with the goal of minimizing the completion time  while guaranteeing UD’s QoS, subject to
the UD’s cache files, the required minimum rate, power
allocation, and NC constraints. The completion time minimization problem in interference-allowed setup is solved over a set of transmitting UDs, their power allocation, dynamic rate selection and transmitted file combinations. By using a graph theory technique, we proposed a novel and efficient
approach that uses cross-layer NC for power optimization and UDs coordinated scheduling in D2D networks. Specifically, our proposed solution judiciously iterates  between finding the MWIS in  the D2D-RA-IDNC graph and optimizing the power allocation, subject to the resultant interference of the newly added transmitting UDs. Simulation results show that the proposed interference-allowed solution  reduces the  completion time  compared to the  interference-free solution as well as  conventional  network coding algorithms.


\begin{thebibliography}{10}

\bibitem{1}
L. Song, D. Niyato, Z. Han, and E. Hossain, \textit{Wireless Device-to-Device Communications and Networks,} Cambridge University Press, 2015.

\bibitem{2}
P. Pahlevani, M. Hundeboll, M. V. Pedersen, D. Lucani, H. Charaf, F. Fitzek, H. Bagheri, and M. Katz, ``Novel concepts for device-to-device communication using network coding,” \emph{IEEE Commun. Mag.,} vol. 52, no. 4, pp. 32-39, 2014. 

\bibitem{3}
Q. Zhang, J. Heide, M. V. Pedersen, and F. H. Fitzek, ``MBMS with user cooperation and network coding,” in \emph{IEEE Global Telecommunications Conference (GLOBECOM),} 2011, pp. 1–6. 

\bibitem{4} 
A. Tassi, C. Khirallah, D. Vukobratovic, F. Chiti, J. S. Thompson, and R. Fantacci, ``Resource allocation strategies for networkcoded video broadcasting services over LTE-advanced,” \emph{IEEE Trans. Veh. Technol.,} vol. 64, no. 5, pp. 2186-2192, 2015. 

\bibitem{5} 
X. Wang, C. Yuen, T. Li, W. Song, and Y. Xu, ``Minimizing transmission cost for third-party information exchange with network coding,” 2014. 

\bibitem{6} 
Cisco visual networking index: global mobile data trafﬁc forecast update, 2017-2022, White Paper, Feb., 2019.

\bibitem{7}  
D. Ferreira, R. Costa, J. Barros et al., ``Real-time network coding for live streaming in hyper-dense Wi-Fi spaces,” \emph{IEEE J. Sel. Areas Commun.,} vol. 32, no. 4, pp. 773-781, 2014.

\bibitem{8}  
D. Traskov, M. Medard, P. Sadeghi, and R. Koetter, ``Joint scheduling and instantaneously decodable network coding,” in \emph{Proc. IEEE Global Telecommun. Conf. (GLOBECOM),} Honolulu, Hawaii, USA, Nov./Dec. 2009, pp. 1-6.  

\bibitem{8a}  
C. Xu, C. Gao, Z. Zhou, Z. Chang, and Y. Jia, ``Social network-based
content delivery in device-to-device underlay cellular networks using
matching theory," \emph{IEEE Access,} vol. 5, pp. 924-937, Nov. 2017.

\bibitem{9new}  
S. Sorour and S. Valaee, ``Completion delay minimization for instantly decodable network codes," \emph{IEEE/ACM Trans. on Netw.,} vol. 23, no. 5, pp. 1553-1567, Oct. 2015.

\bibitem{9newnew}  
N. Aboutorab and P. Sadeghi, ``Instantly decodable network coding for completion time or decoding delay reduction in cooperative
data exchange systems,” \emph{IEEE Trans. Vehi. Tech.,} vol. 65, no. 3, pp. 1212-1228, Mar. 2016.

\bibitem{9a}  
A. Douik, S. Sorour, T. Y. Al-Naffouri, H.-C. Yang, and M.-S. Alouini, ``Delay reduction in multi-hop device-to-device communication using network coding,” in \emph{IEEE International Symposium on Network Coding (NetCod),} 2015, pp. 1–5.








\bibitem{9aa} 
A. Douik, M. S. Al-Abiad and M. J. Hossain, ``An improved weight design for unwanted packets in multicast instantly decodable network coding," in \emph{IEEE Comm. Letters,} vol. 23, no. 11, pp. 2122-2125, Nov. 2019,



\bibitem{9aaa} 
M. S. Al-Abiad, A. Douik and M. J. Hossain, ``Coalition formation game for cooperative content delivery in network coding assisted D2D communications,`` in \emph{IEEE Access,} vol. 8, pp. 158152-158168, 2020.

\bibitem{9}  
A. Douik, S. Sorour, T. Y. Al-Naffouri, and M.-S. Alouini, ``Instantly decodable network coding: From centralized to device-to-device communications,” \emph{IEEE Commun. Surveys Tuts.,} vol. 19, no. 2, pp. 1201-1224, 2nd Quart., 2017.

\bibitem{10}  
M. Saif, A. Douik and S. Sorour, ``Rate aware network codes for coordinated multi base-station networks,” \emph{2016 IEEE Int. Conf. on Commun. (ICC), Kuala Lumpur,} 2016, pp. 1-7.

\bibitem{10n}  
A. Douik, S. Sorour, T. Y. Al-Naffouri and M. Alouini, ``Rate Aware Instantly Decodable Network Codes," in \emph{IEEE Trans. on Wireless Commun.,} vol. 16, no. 2, pp. 998-1011, Feb. 2017.

\bibitem{11}    
M. S. Al-Abiad, A. Douik, and S. Sorour, ``Rate aware network codes for cloud radio access networks,” \emph{IEEE Trans. on Mobile Comp.,} vol. 18, no 8, pp 1898-1910, Aug. 2019. 

\bibitem{12}   
X. Wang, C. Yuen, and Y. Xu, ``Coding based data broadcasting for time critical applications with rate adaptation,” \emph{IEEE Trans. on Vehicular Tech.,} vol. 63, no. 5, pp. 2429-2442, Jun. 2014. 

\bibitem{13}     
M. S. Al-Abiad, A. Douik, S. Sorour, and Md. J. Hossain, ``Throughput maximization in cloud-radio access networks using rate-aware network Coding,” \emph{IEEE Trans. Mobile Comput.,} Early Access, Aug. 2020. 

\bibitem{13nn}     
M. S. Al-Abiad, M. Z. Hassan, A. Douik, and Md. J. Hossain, ``Low-complexity power allocation for network-coded User scheduling in Fog-RANs,” \emph{ IEEE Commu. Letters,} Early Access, Dec. 2020. 

\bibitem{14} 
M. S. Al-Abiad, S. Sorour, and M. J. Hossain, ``Cloud offloading with QoS provisioning using cross-layer network coding,” \emph{ 2018 IEEE Global Commu. Conference (GLOBECOM), Abu Dhabi, UAE,} 2018, pp. 1-7. 

\bibitem{14a} 
M. S. Al-Abiad, M. J. Hossain, and S. Sorour, ``Cross-layer cloud offloading with quality of service guarantees in Fog-RANs,” in \emph{IEEE Trans. on Commun.,} vol. 67, no. 12, pp. 8435-8449, Jun. 2019. 

\bibitem{15}  
M. S. Karim, A. Douik, S. Sorour, and P. Sadeghi, ``Rate-aware network codes for completion time reduction in device-to-device communications,” in \emph{IEEE Int. Conf. on Commu. (ICC),} 2016, pp. 1-7. 

\ignore{\bibitem{15new}  
M. S. Al-Abiad and M. J. Hossain, ``Completion time minimization in Fog-RANs
using D2D communications and rate-aware network coding,” in \emph{IEEE Trans. on Wireless Commun.}, to be published, Jan. 2021 (available: https://arxiv.org/abs/2007.04573). }


\bibitem{16} 
D. B. West et al., \emph{Introduction to graph theory.} Prentice hall Upper Saddle River, 2001, vol. 2. 


\bibitem{Ahmed_Multi_level_EC}
A. Douik, H. Dahrouj, O. Amin, B. Aloquibi, T. Y. A.-Naffouri, and M.-S. Alouini, ``Mode selection and power allocation in multi-level cache-enabled networks,'' \textit{IEEE Commun. Lett.}, vol. 24, no. 8, pp. 1789-1793, Aug. 2020.

\bibitem{Game_Th}
Z. Han \textit{ et al.},\textit{ Game Theory in Wireless and Communication Networks: Theory, Models and Applications},

\bibitem{19} 
A. Sprintson, P. Sadeghi, G. Booker, and S. El Rouayheb, ``A randomized algorithm and performance bounds for coded cooperative data exchange," in \emph{IEEE Int. Symposium on Inf. Theory Proceedings (ISIT),} 2010, pp. 1888–1892.






\end{thebibliography}
\end{document}